\newtheorem{theorem}{Theorem}[section]
\newcommand{\GP}{\mathrm{GP}}
\newcommand{\KL}{\mathrm{KL}}
\newcommand{\KLD}{Kullback-Leibler divergence}
\newcommand{\tr}{\mathrm{tr}}
\newcommand{\BF}{\mathrm{BF}}
\newcommand{\pr}{\mathrm{pr}}
\newcommand{\MM}{\mathcal{M}}
\newcommand{\MN}{\mathcal{N}}
\newcommand{\wKL}{\widetilde{\KL}}
\newcommand{\iid}{independent and identically distributed }
\newcommand{\bandwidth}{\lambda} 
\newcommand{\RKHS}{\mathbb{H}^{\bandwidth}}
\newcommand{\Holder}{\alpha}
\title{Comparing and weighting imperfect models using D-probabilities}
\author[1]{Meng Li}
\author[2]{David B. Dunson}
\affil[1]{Department of Statistics, Rice University}
\affil[2]{Department of Statistical Science, Duke University}
\begin{document}\sloppy 
\maketitle 
\begin{abstract}
We propose a new approach for assigning weights to models using a divergence-based method ({\em D-probabilities}), relying on evaluating parametric models relative to a nonparametric Bayesian reference using Kullback-Leibler divergence.  D-probabilities are useful in goodness-of-fit assessments, in comparing imperfect models, and in providing model weights to be used in model aggregation.  D-probabilities avoid some of the disadvantages of Bayesian model probabilities, such as large sensitivity to prior choice, and tend to place higher weight on a greater diversity of models.  In an application to linear model selection against a Gaussian process reference, we provide simple analytic forms for routine implementation and show that D-probabilities automatically penalize model complexity. Some asymptotic properties are described, and we provide interesting probabilistic interpretations of the proposed model weights. The framework is illustrated through simulation examples and an ozone data application.  
\end{abstract}

\noindent {\em Key words}: Gaussian process; Gibbs posterior; Kullback-Leibler divergence; Model aggregation; Model selection; M-open; Nonparametric Bayes; Posterior probabilities. 

\section{Introduction}

Dealing with uncertainty in model choice is one of the fundamental tasks in statistics~\citep{claeskens2008model}. Suppose we have a list of parametric models under consideration $\mathcal{M}  = \{\mathcal{M}_1,\ldots,\mathcal{M}_k\}$ for the observations $y^{(n)} = \{y_1, \ldots, y_n\} \in \mathcal{Y}^n$ with $\mathcal{Y}$ the sample space. Each model $\mathcal{M}_j$ has a corresponding likelihood $p(\cdot \mid \theta_j, \MM_j)$, with $\theta_j \in \Theta_j$ a finite dimensional parameter. 
Then it becomes of substantial interest to provide a {\em weight} on each model $\mathcal{M}_j$ to be used in goodness-of-fit assessments of model adequacy, for comparing model performance, and for aggregating different models targeted to prediction.  One of the most popular approaches is to use Bayesian model probabilities as weights, with these weights forming the basis of Bayesian Model Averaging (BMA). This article is motivated by an attempt to define weights that can improve upon 
Bayesian model probabilities.  

Assigning equal prior probabilities to each model for simplicity, and letting $\pi(\cdot \mid \mathcal{M}_j)$ denote the prior density for $\theta_j$, for $j=1,\ldots,k$, the posterior probability of model $\mathcal{M}_j$ is 
\begin{equation}
\label{eq:post1}
\pr( \mathcal{M}_j  \mid  y^{(n)} ) = \pi_j = \frac{ L_j( y^{(n)} ) }{ \sum_{l=1}^k L_l(y^{n}) },\quad \mbox{for $j=1,\ldots,k$}, 
\end{equation}
where $L_j( y^{(n)} ) = \int p(y^{(n)}  \mid  \theta_j, \MM_j) \pi( \theta_j  \mid  \mathcal{M}_j )d\theta_j$ is the marginal likelihood.  Philosophically, in order to interpret $\pr( \mathcal{M}_j  \mid  y^{(n)} )$ as a model {\em probability}, one must rely on the (arguably always flawed) assumption that one of the models in the list $\mathcal{M}$ is exactly true, known as the $\MM$-closed case.  
However, from a pragmatic perspective, one can use $\pr( \mathcal{M}_j  \mid  y^{(n)} )$ as a model weight, regardless of the question of interpretation. 
This pragmatic view is supported by the well known result that asymptotically for regular parametric models, the posterior probability on the model that is closest to the true data-generating model in Kullback-Leibler divergence converges to one. 

Unfortunately, as model weights, Bayesian model probabilities have some practical disadvantages.  They are not useful for assessing model adequacy in an {\em absolute} sense, and hence are not calibrated for goodness-of-fit assessments.  Instead, they provide a measure of model performance {\em relative} to the other models under comparison.  A poor model may be assigned a high probability when the competing models are very poor, while a good model may be assigned a low probability when there are many good and/or similar competing models.  In addition, Bayesian model probabilities suffer from large sensitivity to the choice of the prior $\pi( \theta_j  \mid  \mathcal{M}_j )$ without an agreed upon method of default prior specification~\citep{Liang+:08}. Usual non-informative priors used in parameter estimation under a given model are typically improper and cannot be used.  In practice we have observed a tendency of BMA to be {\em over confident} in weighting models - assigning weights that are too close to zero or one.  

One possibility is to consider Bayesian model selection from an $\mathcal{M}$-open or $\mathcal{M}$-complete perspective to allow the true model to fall outside of $\MM$; the $\mathcal{M}$-complete case assumes the true model is known but possibly too complex~\citep{Bernardo+Smith:94}. In these cases, one can formulate the model selection problem in a decision theoretic framework~\citep{Bernardo+Smith:94,GP+:09, Clyde+Iversen:13}, selecting the model in $\MM$ that maximizes expected utility. Expected utility can be approximated either via  cross-validation~\citep{Clyde+Iversen:13} or using a nonparametric prior~\citep{GP+Walker:05,GP+:09}. Cross validation is computationally intensive, and maximizing expected utility produces a single optimal model without uncertainty quantification or weights to be used in model aggregation.    

There is a rich literature on alternative methods for weighting models.  As an approximation to BMA weights, it is common to calculate the Bayesian Information Criterion ($\mbox{BIC}_j$) for each model $\mathcal{M}_j$, and then use weights proportional to $\exp(-\mbox{BIC}/2)$ \citep{kass1995reference, Hoeting1999}. Many authors have proposed to use cross validation to empirically estimate weights to be used in model aggregated predictions.  One example is the so-called {\em super learner} \citep{vanderLaan2007}.  If the focus is on aggregating models, or more broadly predictive algorithms, then it is possible to recast the problem as a two-stage linear regression - in the first stage one fits each of the predictive algorithms separately and obtains the corresponding estimated predictive values, while in the second stage these predictive values are used as predictors in a linear regression.  One can then exploit the rich toolbox of methods for fitting high-dimensional linear regression models to aggregate large numbers of predictive algorithms. \cite{Rigollet2012a} developed an exponential weighting method targeted to aggregation of sparse Gaussian regression models.  All of these methods are focused on providing weights for model aggregation, and are not useful for goodness-of-fit assessments of (absolute) model adequacy.  

We propose a simple definition of model weights that are calibrated in an {\it absolute} sense. To estimate these weights, we require knowledge of the oracle model that generated the data. Using a nonparametric Bayes surrogate for the oracle, we provide methods for estimation and inference. The proposed model weights provide assessment of model adequacy and goodness-of-fit, describe uncertainty in model selection, and are useful in model aggregration.  While the framework is broad, we focus primarily on comparing linear models using a Gaussian process surrogate. The framework reduces sensitivity to the price choice, and default prior specification including improper priors can be used as long as the posteriors under each model are proper. These advantages are verified by a comprehensive simulation study under univariate settings (Section~\ref{sec:simulation}) and an ozone data application involving multivariate predictors (Section~\ref{section:application}). 

Our notion of model weights has a connection to a range of concepts in the literature including~\cite{Boltzmann:1878}.  We establish various probabilistic interpretations using an explicit decision rule in the setting of hypothetical repeated experiments and p-values in Section~\ref{sec:interpretation}. The calibration and coherence of the new model weights may make the framework an appropriate foundation for a wide range of problems beyond linear model selection.

\section{Absolute and relative model weights} 
\label{sec:model.probabilities} 
 
\subsection{Definition of model weights} 
Let $\mathcal{N}^*$ be the oracle model which generated the data and $f^*$ be the corresponding density function.  Let $\KL(f, g) = \int f \log (f / g)$.  
For any model $\MM_j$ with density $f_j$, we define the following {\em absolute} model weights: 
\begin{equation}
	\label{eq:prob.true.marginal}
	\pi_j = \exp\{-n \KL(f^*, f_j)\}, \; j = 1, \ldots, k, 
\end{equation}
which equals the exponentiated negative \KLD\  between $\MM_j$ and the oracle model.  This definition is closely related to the notion of the {\em extent} of a distribution, which was introduced by \cite{Campbell:66} using the exponentiated entropy, with the relative entropy \KLD\ as a special case.  To our knowledge, this notion of extent has been overlooked outside of information theory. 

Under \eqref{eq:prob.true.marginal} $\pi_j \in (0,1)$ since $\KL(f^*, f_j)$ is always nonnegative. However, simply obeying this constraint does not make $\pi_j$ interpretable as a probability or useful as a basis of inference. One obtains a probabilistic interpretation of the model weights \eqref{eq:prob.true.marginal} in an absolute sense if $\pi_j$ corresponds to the probability of an appropriately chosen event that reflects the likelihood under $\MM_j$ relative to 
$\mathcal{N}^*$.  Indeed, in Section~\ref{sec:decision.rule} we show that $\pi_j$ is the probability of selecting $\MM_j$ based on a randomized decision rule that chooses $\MM_j$ in the absence of sufficient evidence in the data to distinguish $\MM_j$ from $\mathcal{N}^*$.

\subsection{Conditional model weights} 
The definition of $\pi_j$ in~\eqref{eq:prob.true.marginal} provides an absolute measure of adequacy of a specific model.  In quantifying the relative performance of different models in a pre-specified list $\MM$, and in aggregating these models to obtain an ensemble predictive algorithm, it is useful to define conditional model weights.  We define the conditional weight for model $\MM_j$ as 
\begin{equation} \label{eq:prob.conditional} 
\pi_{j \mid \MM} = \frac{ \pi_j }{ \sum_{l=1}^k \pi_l } = \frac{ \exp\big\{-n \KL(f^*, f_j) \big\} }
{ \sum_{l=1}^k \exp\big\{-n \KL(f^*, f_l)\}}, 
\end{equation}
which is simply the absolute weight for model $\MM_j$ divided by the sum of the corresponding weights for each of the models in $\MM$.

The weights in \eqref{eq:prob.conditional} can be used to compare alternative parametric models.  Equation \eqref{eq:prob.conditional} has the same form as the famous Boltzmann-Gibbs weights in statistical mechanics with unit inverse temperature, where $\KL(f^*, f_j)$ is the energy of model $j$.  By defining conditional model weights relative to other models in the list $\MM$, we obtain a direct alternative to posterior model probabilities used in Bayesian inferences.  We will later show that the weights in \eqref{eq:prob.conditional} are asymptotically equivalent to usual posterior model probabilities if $f^* = f_j$ for some $j \in \{1,\ldots,k\}$, so that the oracle model exactly corresponds to one of the candidate parametric models.  Although we view this assumption as unrealistic, this property is nonetheless reassuring.  

\subsection{Estimation of model weights: D-probabilities} 
\label{sec:D-Bayes}

The model weights $\pi_j$ and $\pi_{j \mid \MM}$ cannot be calculated directly, because the oracle model $f^*$ is unknown and models in $\MM$ typically contain unknown parameters.
To allow $f^*$ to be unknown, we introduce a nonparametric reference model $\mathcal{N}$, which can be considered to be sufficiently flexible to accurately approximate the oracle, with accuracy improving with sample size.  The nonparametric reference has density $f_0$ and parameter $\theta_0$.  
The absolute and conditional model weights given the model list $\MM$ become
\begin{equation} \label{eq:prob.reference} 
\pi_{j} = \exp\big\{-n \wKL(f_0, f_j) \big\} ,  \quad 
\pi_{j \mid \MM} = \frac{ \exp\big\{-n \wKL(f_0, f_j) \big\} }
{ \sum_{l=1}^k \exp\big\{-n \wKL (f_0, f_l)\}}, 
\end{equation}
where $\wKL(f_0, f_j)$ is an estimate of the Kullback-Leibler divergence between model $\MM_j$ and the reference model $\MN$.  

We propose the following two estimators: a posterior mean estimator
\begin{equation}
\label{eq:posterior.mean}
\wKL_1(f_0, f_j) = \int \int \KL\{f_0(\cdot \mid \theta_0), f_j(\cdot \mid \theta_j) \} \pi(\theta_j \mid y^{(n)}) \pi(\theta_0 \mid y^{(n)}) d \theta_j d\theta_0, 
\end{equation}
and an estimator based on posterior predictive densities 
\begin{equation}
\label{eq:posterior.predictive}
\wKL_2(f_0, f_j) = \KL(\widehat{f}_0, \widehat{f}_j ), \;\text{where}\; \widehat{f}_j(\cdot) = \int f_j(\cdot \mid \theta_j) \pi(\theta_j \mid y^{(n)}) d \theta_j,\; \text{and}\; j= 0, 1, \ldots, k.  
\end{equation}
These two estimators address the uncertainty of parameters $(\theta_j, \theta_0)$ differently: 
the posterior mean estimator uses the posterior mean of $\KL\{f_0(\cdot \mid \theta_0), f_j(\cdot \mid \theta_j) \}$, while the  posterior predictive estimator uses the \KLD\ between predictive densities of each model. As shown later in Section~\ref{sec:BF}, the two estimators have the same asymptotic behavior and converge to the minimum \KLD\ to the oracle model among $\theta_j \in \Theta_j$ under mild conditions.  In practice, one can use whichever approximation is most convenient, or even rely on a mixture of~\eqref{eq:posterior.mean} and~\eqref{eq:posterior.predictive}. 


We refer to the quantities in expression (\ref{eq:prob.reference}) as {\em D-probabilities}, as they provide a divergence-based alternative to Bayesian posterior model probabilities.  D-probabilities provide an absolute measure of model adequacy and goodness-of-fit and avoid large sensitivity to prior choice; 
both of these issues are notoriously poorly addressed in the Bayesian literature.  The main challenges in the use of D-probabilities include the need to choose a nonparametric reference model, and develop accurate approximation algorithms.  The nonparametric Bayes literature provides a rich menu of possibilities for $\mathcal{N}$, ranging from Dirichlet processes ~\citep{Ferguson:73} to Gaussian processes~\citep{Rasmussen+Williams:06}; for a review, refer to \cite{Hjort+:10}.  There is a rich literature showing that Bayesian nonparametric models often have appealing frequentist asymptotic properties, such as appropriate notions of consistency~\citep{schwartz1965bayes} and optimal rates of convergence \citep{van+van:09, Bhattacharya2014, Castillo2014, shen2015adaptive,Ghosal+van:17}. 

For simplicity in exposition and computational ease, we focus on normal linear models with a Gaussian process reference for the remainder of the article except for Section~\ref{sec:interpretation}.  In this case, conditional on covariance parameters, the \KLD\ between each parametric model and the nonparametric model can be calculated analytically, allowing us to rapidly conduct analyses and more easily study properties of the proposed model weights. There has been extensive study showing optimality properties 
of Gaussian process priors, such as rate adaptive behavior in nonparametric regression \citep{van+van:09}.

Although we focus on Bayesian machinery, one can estimate D-probabilities using any method that estimates $\KL(f_0, f_j)$.  Substantial work has focused on estimating the \KLD\ between two unknown densities based on samples from these densities \citep{Leonenko2008, Perez-Cruz2008, Bu2018}. Our setting is somewhat different, but the local likelihood methods of \cite{Lee2006} and the Bayesian approach of \cite{Viele2007a} can potentially be used, among others. On the other hand, our proposed estimator of $\KL(f_0, f_j)$ may be of independent interest and can be used in other contexts. 

\section{D-Bayes inference for linear models}
\label{sec:linear.model} 

\subsection{Analytical forms of D-probabilities} 
\label{sec:analytical.form} 
Let $\{(x_i, y_i): x_i = (x_{i1}, \ldots, x_{ip}) \in \mathbb{R}^p, y_i \in \mathbb{R} \}_{i = 1}^n$ be \iid observations following the model 
\begin{equation}
\label{eq:model}
y \mid x \sim N\{\mu(x),  \sigma^2\}, 
\end{equation}
where $x$ is a $p$-dimensional predictor and $y$ is a univariate response.  
Let $Y = (y_1, \ldots, y_n)$ and $X = (x_1^T, \cdots, x_n^T)^T$. Letting $j=0$ index the reference model $\mathcal{N}$ and $j=1,\ldots,k$ index the parametric models, we let $\mu_j(\cdot)$ and $\sigma_j^2$ denote the mean function and variance, respectively, for model $j$.  According to the chain rule of the \KLD, $\wKL_t(f_0, f_j)$ is equal to the \KLD\   between the conditional densities of $y$ given $x$ followed by an expectation with respect to the distribution of $x$.  We use the empirical distribution of $x$ for both $\wKL_1(f_0, f_j)$ and $\wKL_2(f_0, f_j)$, which for example means that the term in~\eqref{eq:posterior.mean} is calculated by $\KL\{f_0(\cdot \mid \theta_0), f_j(\cdot \mid \theta_j)\} = \sum_{i = 1}^n \KL\{f_0(\cdot \mid \theta_0, x_i), f_j(\cdot \mid \theta_j, x_i)\}/n.$

We use a Gaussian process prior for $\mu_0(\cdot)$, with 
$\mu_0(\cdot) \mid \sigma, \tau, \lambda \sim \GP\{ 0,  \sigma^2 k(\cdot, \cdot; \lambda, \tau)\}$ and covariance function  
\begin{equation}
\label{eq: se.gp} 
k(x_{i_1}, x_{i_2}; \lambda, \tau ) = \tau^2 \exp\left\{\sum_{j = 1}^{p} \frac{- (x_{i_1, j} - x_{i_2, j})^2}{2 \lambda_{j}^2}\right\},
\end{equation}
having predictor-specific bandwidth parameters $\lambda = (\lambda_1, \ldots, \lambda_p)^T$. For notational convenience, we suppress the dependence of the 
covariance function on $\lambda$ and $\tau$; estimation of these hyper-parameters is discussed in Section~\ref{section:hyper.parameter}. The prior distribution of $\sigma_0^2$ is specified as $p(\sigma_0^2) \propto 1/\sigma_0^2$. Let $K$ be the covariance matrix whose $(i,j)$th element is $k(x_i, x_j)$, and $\mu_0^{(n)} = \{\mu_0(x_1), \ldots, \mu_0(x_n)\}$ be the conditional mean vector.  Then the reference model $\mathcal{N}$ assumes $Y \mid  \mu_0^{(n)}, \sigma_0 \sim N\{\mu_0^{(n)}, \sigma_0^2 I_n\}$, with priors $\mu_0^{(n)} \mid \sigma_0 \sim N(0, \sigma_0^2 K)$ and $p(\sigma_0^2) \propto 1/\sigma_0^2$. Letting $H = (I +  K^{-1})^{-1} = K (K + I)^{-1}$, we have 
\begin{equation}
\mu_0^{(n)} \mid  X, Y, \sigma_0 \sim N( H Y,  \sigma_0^2 H), \quad 
\sigma_0^2 \mid X, Y\sim \text{IG}\left\{ \frac{n}{2}, \frac{1}{2} Y^T (I - H) Y\right\}. 
\end{equation}
	
For model $\mathcal{M}_j$, let $x_j$ be a $p_j$-dimensional sub-vector of $x$ and $\mu_j(x) = (1,x_j^T)\beta_j$, so that model $\mathcal{M}_j$ has parameters $\theta_j = (\beta_j,\sigma_j^2)$. Letting $X_j$ denote the corresponding design matrix including a column of ones, the mean vector is $\mu_j^{(n)} = \{\mu_j(x_1),\ldots,\mu_j(x_n)\}^T = X_j \beta_j$.  With the following prior distributions: 
\begin{equation}
\label{eq:prior.m.j}
\beta_j | \sigma_j^2 \sim N(0, \sigma_j^2 \Sigma_j), \quad p(\sigma^2_j) \propto 1/\sigma^2_j, 
\end{equation}
for some prior covariance matrix $\Sigma_j$, the posterior distributions are 
\begin{equation}
\mu_j^{(n)} \mid \sigma_j^2, X_j, Y \sim N(H_j Y, \sigma_j^2 H_j), \quad \sigma_j^2 \mid X_j, Y \sim 
{\text{IG}}\left\{ \frac{n}{2}, \frac{1}{2}Y^T (I - H_j) Y\right\}, 
\end{equation}
where $H_j = X_j ( X_j^T X_j + \Sigma_j^{-1})^{-1} X_j^T$. 

The posterior mean estimates of $\wKL_1(f_0, f_j)$ in~\eqref{eq:posterior.mean} are obtained as follows. Conditional on unknown parameters $(\beta_j, \sigma_j, \mu_0^{(n)}, \sigma_0)$, the \KLD\ between model $\MM_j$ and the reference model $\mathcal{N}$ is 
\begin{equation} 
\label{eq:KL.risk} 
\KL(f_0, f_j \mid  \beta_j, \sigma_j, \mu_0^{(n)}, \sigma_0) 
= \frac{1}{2} \left\{\frac{\sigma_0^2}{\sigma_j^2} + \frac{(X_j \beta_j - \mu_0^{(n)})^T  (X_j \beta_j - \mu_0^{(n)})}{n \sigma_j^2} - 1 +  \log \frac{\sigma_j^2}{\sigma_0^2} \right\}. 
\end{equation} 
Using the fact that $(X_j \beta_j - \mu_0^{(n)}) \mid \sigma_j, \sigma_0 \sim N\{ (H_j - H)Y, \sigma^2_j H_j + \sigma_0^2 H\}$, we can further marginalize out $\mu_0^{(n)}$ and $\beta_j$ to obtain 
\begin{align}
\KL(f_0, f_j \mid  \sigma_j, \sigma_0)   
& = E_{\beta_j, \mu_0^{(n)} \mid \sigma_j, \sigma_0} \{ \KL(f_0, f_j \mid  \beta_j, \sigma_j, \mu, \sigma, \lambda, \tau)  \}\\
= & \frac{1}{2} \left\{\frac{ \sigma_0^2}{\sigma_j^2} + \frac{Y^T (H_j - H)^2 Y + \tr(\sigma_j^2 H_j + \sigma_0^2 H)}{ n \sigma_j^2} - 1 +   \log \frac{\sigma_j^2}{\sigma_0^2} \right\}\\
\label{eq:lm.KL.posterior.mean.conditional} 
= & \frac{1}{2} \left\{\frac{Y^T (H_j - H)^2 Y }{ n \sigma_j^2} + \frac{ \{1 + \tr(H)/n\}\sigma_0^2}{\sigma_j^2} + \log \frac{\sigma_j^2}{\sigma_0^2} + \frac{\tr(H_j)}{n} - 1 \right\}. 
\end{align}
By further integrating out $\sigma_j$ and $\sigma_0$, we obtain that $\wKL_1(f_0, f_j) =  E_{\sigma_0, \sigma_j}\KL(f_0, f_j \mid  \sigma_j, \sigma_0) $ as
\begin{equation}
\label{eq:lm.KL.posterior.mean} 
\wKL_1(f_0, f_j)  = \frac{1}{n} (\mathcal{G}_{j, 1} + \mathcal{P}_{j, 1}), 
\end{equation}
where 
\begin{equation}
\mathcal{G}_{j, 1}  = \frac{n}{2} \left[ \frac{Y^T (H_j - H)^2 Y}{ Y^T(I - H_j) Y}   + \frac{\{\tr(H) + n\}Y^T(I - H)Y}{(n - 2) Y^T(I - H_j) Y } + \log \frac{Y^T(I - H_j) Y}{Y^T (I - H) Y}  - 1 \right], 
\end{equation}
and $\mathcal{P}_{j, 1} = \tr(H_j) / 2. $

We next obtain $\wKL_2(f_0, f_j)$. Conditional on the variance parameters, the posterior predictive densities evaluated at $\{x_1, \ldots, x_n\}$ under the reference model and model $\MM_j$ are $N\{HY, \sigma_0^2 (I + H)\}$ and $ N\{H_jY, \sigma^2_j (I + H_j)\}$, respectively. Therefore, the \KLD\ $\wKL(f_0, f_j)$ conditional on the variances $(\sigma_j^2, \sigma^2_0)$ is 
\begin{equation}
\label{eq:lm.KL.predictive.conditional} 
\begin{split} 
\wKL_2(f_0, f_j \mid \sigma_j, \sigma_0) = \frac{1}{2} \left[\frac{Y^T(H_j - H)^T (I + H_j)^{-1} (H_j- H)Y}{n \sigma_j^2} \right.\\ \left. + \frac{\sigma_0^2}{\sigma_j^2} \frac{\tr\{(I + H_j)^{-1} (I + H)\}}{n}   +   \log  \frac{\sigma_j^2}{\sigma_0^2} + \frac{1}{n} \log\frac{\det (I + H_j)}{\det ( I + H)} - 1 \right].
\end{split} 
\end{equation}
Integrating out the variance parameters $\sigma_0^2$ and $\sigma_j^2$ leads to 
\begin{equation}
\label{eq:lm.KL.predictive}
\wKL_2(f_0, f_j) = \frac{1}{n} (\mathcal{G}_{j, 2} + \mathcal{P}_{j, 2}), 
\end{equation}
where 
\begin{align}
\mathcal{G}_{j, 2} & =   
\frac{n}{2} \left[ \frac{Y^T(H_j - H)^T (I + H_j)^{-1} (H_j- H)Y}{Y^T(I - H_j)Y}  + \frac{Y^T(I - H)Y}{Y^T(I - H_j) Y} \frac{\tr\{(I + H_j)^{-1} (I + H)\}}{n - 2}  \right. \\ 
&  + \left.  \log  \frac{Y^T(I - H_j) Y}{Y^T(I - H)Y} - \log {\det ( I + H)} - 1\right], 
\end{align}
and $\mathcal{P}_{j, 2} = \log{\det (I + H_j)}  / 2.$ 

Hence, both the posterior mean estimator in~\eqref{eq:lm.KL.posterior.mean} and posterior predictive density estimator in~\eqref{eq:lm.KL.predictive} admit the decomposition of the form 
$(\mathcal{G}_{j, t} + \mathcal{P}_{j, t}) / n$ for $ t = 1, 2$.  Let the corresponding D-probabilities be $\pi_{j, t} = \exp(-\mathcal{G}_{j, t} - \mathcal{P}_{j, t})$.  The term $\mathcal{G}_{j, t}$ is the goodness-of-fit of model $\MM_j$ compared to the reference model and $\mathcal{P}_{j, t}$ is a penalty term on model complexity.  The trace of $H_j$ is commonly used as the degrees of freedom of model $\MM_j$, and the log determinant of the fitted covariance matrix $\log \det (I + H_j)$ introduces a penalty on the rank of the covariance matrix~\citep{Fazel+:03}. Unlike most model selection criteria in the literature, the D-probability $\pi_{j, t}$ is interpretable in an absolute sense for each candidate model, as discussed in Section~\ref{sec:interpretation}. Therefore, the expression $\mathcal{G}_{j, t}$ keeps any constant even when it is the same across all models. 

If we  use the flat prior where $\Sigma_j^{-1} = 0$, the matrix $H_j$ is idempotent and we thus have $\tr(H_j) = p_j + 1$ and $ \log \det (I + H_j) = (p_j + 1) \log2$. Consequently, the D-probabilities penalize model complexity by 
\begin{equation}
\label{eq:penalty.flat}
-\mathcal{P}_{j, 1} = \frac{1}{2}(p_j + 1),\quad -\mathcal{P}_{j, 2} = \frac{\log 2}{2} (p_j + 1). 
\end{equation}
When comparing two models $\MM_j$ and $\MM_{j'}$ where $j \neq j'$, the relative penalties on model complexity are the same as used in some existing criteria.  
Specifically,  the penalty term $\mathcal{P}_{j', 1} - \mathcal{P}_{j, 1} = (p_j - p_{j'})/2$ is used in the Akaike information criterion~\citep{Akaike1998, akaike1974new} and the pseudo-Bayes factor~\citep{Geisser+Eddy:79}, while  $\mathcal{P}_{j', 2} - \mathcal{P}_{j, 2} = \log 2 (p_j - p_{j'}) / 2$ is the penalty term in the posterior Bayes factor~\citep{Aitkin:91, Gelfand+Dey:94}. 



\subsection{Selection of hyperparameters} 
\label{section:hyper.parameter} 
We estimate the parameters $(\lambda, \tau)$ by maximizing the log marginal likelihood $\log p(Y \mid  \lambda, \tau).$ Based on the log-likelihood of $Y$ conditional on $\{\mu_0^{(n)}, \sigma^2_0, \lambda, \tau\}$,  we first integrate out  $\mu_0^{(n)}$ to obtain 
\begin{align}
\label{eq:log.marginal}
\log p(Y \mid  \lambda, \tau, \sigma^2_0) & = -\frac{n}{2} \log(2\pi) - \frac{1}{2} \log |\sigma_0^2 K + \sigma_0^2 I| - \frac{1}{2} Y^T (\sigma_0^2 K + \sigma_0^2 I)^{-1} Y \\
& = -\frac{n}{2} \log(2\pi) - \frac{n}{2} \log \sigma_0^2  - \frac{1}{2} \log |K +  I| - \frac{1}{2 \sigma_0^2} Y^T ( K + I)^{-1} Y, 
\end{align}
and further integrate out $\sigma_0^2$, 
\begin{equation}
\label{eq:log.marginal.lambda.tau} 
\log p(Y \mid  \lambda, \tau )  = - \frac{1}{2} \log |K +  I|  -\frac{n}{2} \log \{Y^T ( K + I)^{-1} Y\} + \text{constant}. 
\end{equation}
Let $(\lambda_{\text{EB}}, \tau_{\text{EB}})$ be the empirical Bayes estimates maximizing equation~\eqref{eq:log.marginal.lambda.tau}. Then the D-probability of model $\MM_j$ is 
\begin{equation}
\pi_j^{\text{EB}} = \exp\{-n \wKL(f_0, f_j \mid \lambda_{\text{EB}}, \tau_{\text{EB}})\}. 
\end{equation}

To avoid conditioning on an empirical point estimate of $(\lambda, \tau)$, one may alternatively implement Markov chain Monte Carlo methods to draw posterior samples of the $(p + 1)$-dimensional parameter $(\lambda, \tau)$ based on the likelihood in~\eqref{eq:log.marginal.lambda.tau} and priors with positive supports such as gamma distributions. 
Let $(\lambda^{(1)}, \tau^{(1)}), \ldots, (\lambda^{(J)}, \tau^{(J)})$ be the posterior samples after burn-in, then the D-probability of model $\MM_j$ is 
\begin{equation}
\label{eq:mcmc} 
\pi_j^{\text{MCMC}} = \exp\left\{- \frac{n}{J} \sum_{i = 1}^J \wKL(f_0, f_j\mid \lambda^{(i)}, \tau^{(i)})\right\}. 
\end{equation}

\section{Asymptotic behavior} 
\label{sec:BF} 

In this section, we investigate the asymptotic behavior of the proposed $\wKL_t(f_0, f_j)$ in Sections~\ref{sec:D-Bayes} for linear models, and relate conditional D-probabilities with usual posterior model probabilities.  We consider a compact support for the covariates, which is taken as $[0, 1]^d$ without loss of generality.  Let $C[0, 1]^d$ be the space of continuous functions on $[0, 1]^d$. For a function $g: [0, 1]^d \rightarrow \mathbb{R}$, $x \in [0, 1]^d$ and $i = 1, \ldots, d$, let $g_i(\cdot|x)$ be a univariate function such that for any $s \in [0, 1]$, $g_i(s | x) = g(\ldots, x_{i - 1}, s, x_{i + 1}, \ldots)$, i.e., the $i$th element in the argument is replaced by $s$.  Let $\|g\|_{\infty} = \sup_{x \in [0, 1]^d} |g(x)|$ be the supremum norm of a function $g$, then the anisotropic H\"{o}lder space $C^{\Holder}[0, 1]^d$ indexed by a vector of positive numbers  $\Holder = (\Holder, \ldots, \Holder_d)$ contains all functions $g$ such that for some $L > 0$, 
\begin{equation}
\underset{x \in [0, 1]^d}{\sup} \left\{\sum_{j = 0}^{\lfloor \Holder_i \rfloor} \| D^j g_i(\cdot | x) \|_{\infty} +
\frac{\|D^{\lfloor \Holder_i \rfloor} g_i(y + h \mid x) - D^{\lfloor \Holder_i \rfloor} g_i(y \mid x) \|_{\infty}}{|h|^{\Holder_i - \lfloor \Holder_i \rfloor}} \right\}\leq L  
\end{equation}
for any $(y, h)$ such that $y \in [0, 1]$, $h > 0$, $y + h \in [0, 1]$ and $i = 1, \ldots, d$; here $\lfloor \cdot \rfloor$ is the floor function and $D^j$ is the $j$th derivative operator. 
Let $\Holder_0^{-1} = \sum_{i = 1}^d \Holder_i^{-1}$ be an exponent of global smoothness~\citep{Birge1986, Barron1999a, Hoffmann2002}. 

For each model $\MM_j$, we define  
\begin{equation}
\label{eq:delta.j} 
\theta_j^* = \underset{\theta_j \in \Theta_j}{\arg \min}{\;\KL\{f^*, f_j(\cdot \mid \theta_j)\}},\quad 
\delta_j = \KL\{f^*, f_j(\cdot \mid \theta_j^*)\}. 
\end{equation}
The parameter value $\theta_j^*$ is the so-called pseudotrue parameter~\citep{Bunke1998}. A usual condition of Bayesian nonparametric models is that $\delta_j=0$ for all $f^*$ in a large set of densities. Unless $f^*$ exactly follows the parametric model under consideration, we have $\delta_j>0$ in general for any parametric model.

As the sample size $n$ increases, the posterior measure for the density $f$ under the nonparametric model $\mathcal{N}$ will tend to concentrate in arbitrarily small Kullback-Leibler neighborhoods of the true data-generating model $f^*$.  In contrast, the posterior measure for $f$ under the parametric model $\mathcal{M}_j$ will tend to concentrate on the point in the parametric class having the minimal Kullback-Leibler divergence from $f^*$.  Heuristically, this type of behavior suggests that the proposed $\wKL_t(f_0, f_j)$ will tend to converge to the minimal \KLD\ within the support of model $\MM_j$ as $n$ increases. However, as an information criterion, the \KLD\ may behave erratically~\citep{Barron1998a}, and the individual convergence of $f_0$ and $f_j$ does not directly imply the convergence of $\wKL_t(f_0, f_j)$~\citep{lkeda1960}. We overcome these difficulties by taking advantage of the Gaussianity assumption on the errors, which allows us to relate the \KLD\ to well studied distances on model parameters.  This is formalized in Theorem \ref{thm:BNP} based on the following assumptions. 
\begin{enumerate}[label=(\alph*)]
	\item Let $\theta_0^* = \{\mu_0^*, \sigma_0^*\}$ be the true parameter values under model~\eqref{eq:model}, which satisfy $\mu_0^*(\cdot) \in C^{\Holder}[0, 1]^d$ and $\sigma_0^* \in [a, b] \subset [0, \infty)$.
	The covariate $x$ is either fixed or randomly drawn from a density on $[0, 1]^d$ that is bounded away from zero and infinity. 
	\item For the reference model, the regression function $\mu_0$ has a Gaussian process prior $\Pi_{\bandwidth}$ with the squared exponential kernel function as in~\eqref{eq: se.gp}; the prior distribution of $\sigma_0$ has continuous density and is supported on $[a, b]$. 
	\item For each candidate model $\MM_j$, the prior distribution of $\theta_j = (\beta_j, \sigma_j)$ is supported on $\Theta_j  = \mathbb{R}^{p_j + 1} 
	\times [a, b]$,  which has a continuous density that is bounded away from zero and infinity. 
	The pseudotrue parameter $\theta_j^*$ is unique and interior to $\Theta_j$. 
\end{enumerate}
 
\begin{theorem}
	\label{thm:BNP}
	Under model~\eqref{eq:model} and Assumptions (a), (b) and (c), for $\wKL_1(f_0, f_j)$ in~\eqref{eq:posterior.mean}, if $\bandwidth_i = n^{\Holder_0/(2 \Holder_0 \Holder_i + \Holder_i)} (\log n)^{(d + 1)\Holder_0/(2 \Holder_0 \Holder_i+ \Holder_i)}$,  there exits a universal constant $c > 0$ such that 
	\begin{equation}
	E_{\theta^*} | \wKL_1(f_0, f_j)  - \delta_j | \leq c n^{-\Holder_0/(2 \Holder_0 + 1)} (\log n)^{(d + 1)\Holder_0/(2 \Holder_0 + 1)}. 
	\end{equation}
	For $\wKL_2(f_0, f_j)$ in~\eqref{eq:posterior.predictive}, we have 
	\begin{equation}
	0 \leq E_{\theta_0^*}\{\wKL_2(f_0, f_j) - \delta_j \}\leq c n^{-\Holder_0/(2 \Holder_0 + 1)} (\log n)^{(d + 1)\Holder_0/(2 \Holder_0 + 1)}
	\end{equation}
	for sufficiently large $n$.  
\end{theorem}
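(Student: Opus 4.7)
The plan is to combine the closed-form Gaussian expressions for the conditional KL divergence derived in Section~\ref{sec:analytical.form} with posterior contraction rates for the GP reference and parametric posterior concentration for $\theta_j$, then expand $\wKL_t - \delta_j$ around the truth $\theta_0^*$ and the pseudotrue $\theta_j^*$ by a Taylor-style argument. The rate $\varepsilon_n := n^{-\Holder_0/(2\Holder_0+1)}(\log n)^{(d+1)\Holder_0/(2\Holder_0+1)}$ in the statement is precisely the minimax posterior contraction rate of a squared-exponential Gaussian process on the anisotropic H\"{o}lder class $C^{\Holder}[0,1]^d$ with the prescribed bandwidth, established by the line of work beginning with \cite{van+van:09} and its anisotropic refinements.

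First I would invoke (or verify, under the stated bandwidth $\bandwidth_i$) posterior contraction of the GP reference, namely $\Pi(\|\mu_0 - \mu_0^*\|_n > M\varepsilon_n \mid Y) \to 0$ in $P_{\theta_0^*}$-probability, where $\|g\|_n^2 = n^{-1}\sum_{i=1}^n g(x_i)^2$ is the empirical $L_2$ norm, while $\sigma_0\mid Y$ concentrates at the parametric rate $n^{-1/2}$ around $\sigma_0^*$ within the compact support $[a,b]$ from Assumption (b). For each parametric model $\MM_j$, Assumption (c) places $\theta_j^*$ in the interior of $\Theta_j$ with a smooth positive prior, and a misspecified Bernstein--von Mises argument yields $n^{-1/2}$ concentration of $\pi(\theta_j \mid Y)$ around $\theta_j^*$.

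Next I would substitute these rates into the analytic form~\eqref{eq:KL.risk}. Using the decomposition $\|\mu_0^{(n)} - X_j\beta_j\|^2/n \le 3(\|\mu_0 - \mu_0^*\|_n^2 + \|\mu_0^* - X_j\beta_j^*\|_n^2 + \|X_j(\beta_j - \beta_j^*)\|^2/n)$ and Lipschitz continuity of the ratio and log-ratio in $(\sigma_0^2,\sigma_j^2)$ guaranteed by the compact $[a,b]$ support, the deviation of $\KL\{f_0(\theta_0),f_j(\theta_j)\}$ from $\delta_j$ localizes: the dominant contribution is $O(\|\mu_0-\mu_0^*\|_n)=O(\varepsilon_n)$, while the parametric piece is $O(n^{-1/2})$ and negligible. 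Taking posterior expectation and then $E_{\theta^*}$, with a truncation of the posterior to a high-probability sieve to dispose of the non-contraction event, yields $E_{\theta^*}|\wKL_1 - \delta_j|\le c\varepsilon_n$. For $\wKL_2$, the posterior predictive densities $\widehat{f}_0,\widehat{f}_j$ at the design points are Gaussian after conditioning on the variance parameters, and the same Taylor expansion applied to~\eqref{eq:lm.KL.predictive.conditional} delivers the upper bound $E_{\theta_0^*}\wKL_2 - \delta_j\le c\varepsilon_n$. The lower bound $E_{\theta_0^*}\wKL_2 \ge \delta_j$ uses the chain decomposition $\KL(\widehat{f}_0,\widehat{f}_j) = \KL(\widehat{f}_0,f^*) + \int \widehat{f}_0 \log(f^*/\widehat{f}_j)$ combined with the variational characterization $\delta_j = \inf_{\theta_j\in\Theta_j}\KL\{f^*, f_j(\cdot\mid\theta_j)\}$: the first summand is nonnegative, and parametric consistency drives $\widehat{f}_j \to f_j(\cdot\mid\theta_j^*)$ at rate $O(n^{-1})$, so that taking $E_{\theta_0^*}$ of the second summand gives at least $\delta_j$ up to a vanishing cross term controlled by Cauchy--Schwarz between $\widehat{f}_0 - f^*$ and the log-ratio, whose boundedness is inherited from $\sigma\in[a,b]$.

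The main obstacle is making the anisotropic SE-GP posterior contraction rigorous with the precise logarithmic exponent $(d+1)\Holder_0/(2\Holder_0+1)$: this requires bounding the concentration function of the GP measure at $\mu_0^*$ in the RKHS $\RKHS$ under the chosen $\bandwidth$, constructing a sieve of bounded RKHS balls, and controlling its metric entropy in the empirical $L_2$ norm, all of which are nontrivial technical extensions of the isotropic theory. A secondary subtlety is the uniform-integrability argument needed to convert in-probability posterior contraction into bounds on $E_{\theta^*}|\wKL_t - \delta_j|$; this relies essentially on the compact $[a,b]$ support of $\sigma$ (which prevents blow-up of the ratio and log terms) and on the Gaussian tails of the linear-model posterior for $\beta_j$.
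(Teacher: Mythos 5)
Your treatment of $\wKL_1$ matches the paper's proof in all essentials: anisotropic squared-exponential GP contraction at rate $\epsilon_n = n^{-\Holder_0/(2\Holder_0+1)}(\log n)^{(d+1)\Holder_0/(2\Holder_0+1)}$ via the concentration function, misspecified Bernstein--von Mises for each $\MM_j$, and substitution into the Gaussian closed form with the first-order cross term $O(\|\mu_0-\mu_0^*\|_n)$ dominating. One difference of execution: the paper invokes the \emph{in-expectation} posterior risk bound $E_{\theta_0^*}\int\|\mu_0-\mu_0^*\|_n^2\,\pi(d\mu_0\mid Y)\lesssim\epsilon_n^2$ (Theorem 1 of van der Vaart and van Zanten, 2011) rather than in-probability contraction, which makes the truncation/uniform-integrability patch you describe unnecessary. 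Also, the paper only establishes (and only needs) rate $\epsilon_n$ for $\sigma_0$; your claim of $n^{-1/2}$ concentration for $\sigma_0$ under a nonparametric mean is stronger than required and not obviously available.

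For $\wKL_2$ your route genuinely diverges from the paper's, and this is where the gaps are. The paper gets the upper bound for free from the convexity inequality $\wKL_2(f_0,f_j)\le\wKL_1(f_0,f_j)$ (Lemma 1 of the supplement, a Jensen argument for KL between posterior predictive densities), which sidesteps having to control $\tr(H)/n$ and $n^{-1}\log\det(I+H)$ in the predictive form~\eqref{eq:lm.KL.predictive.conditional}; your direct Taylor expansion of that display would need explicit bounds on the effective dimension of the GP, which you do not supply. More seriously, your lower-bound argument has two soft spots. First, the log-ratio $\log(f^*/\widehat f_j)$ of two Gaussian densities is a quadratic in $y$ and is \emph{not} bounded, so the claimed Cauchy--Schwarz control of the cross term $\int(\widehat f_0-f^*)\log(f^*/\widehat f_j)$ does not follow from $\sigma\in[a,b]$ alone; you need moment bounds against the Gaussian tails. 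Second, $\widehat f_j$ is a posterior mixture lying \emph{outside} the parametric family, so $\KL(f^*,\widehat f_j)\ge\delta_j$ is not automatic from the variational characterization of $\delta_j$; one must pass through convergence $\widehat f_j\to f_j(\cdot\mid\theta_j^*)$ and joint lower semicontinuity of the KL divergence, which is exactly what the paper's Lemma 2 supplies to conclude $\liminf_n E_{\theta_0^*}\wKL_2(f_0,f_j)\ge\delta_j$. Your architecture can be repaired along these lines, but as written the $\wKL_2$ lower bound is not complete.
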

\begin{proof}
	See the Appendix.  
\end{proof}
The anisotropic H\"{o}lder space has been used to consider dimension-specific smoothness; for example, see~\cite{Barron1999a}. The rate $n^{-\Holder_0/(2 \Holder_0 + 1)}$ is the minimax rate of convergence for a function in $C^{\Holder}[0, 1]^d$ according to~\cite{Hoffmann2002}. We can obtain a rate-adaptive version of Theorem~\ref{thm:BNP} without requiring the knowledge of $\Holder$ to select $\bandwidth$ by introducing an appropriate hyper-prior on $\lambda$ following the random rescaling scheme in~\cite{van+van:09} and~\cite{Bhattacharya2014}. 

	 Furthermore, the Gaussian process prior in Assumption (b) can be replaced by any nonparametric priors that lead to nearly optimal contraction rate of the mean function under $\|\cdot\|_n$ or a stronger metric, such as random series priors using a wavelet basis~\citep{Castillo2014} or B-splines~\citep{Yoo2016}. 

Theorem~\ref{thm:BNP} suggests that the numerator in $\pi_j$ given by~\eqref{eq:prob.reference} is approximately  $\exp(-n \delta_j)$ for large $n$. Consequently,  for $j_1 \neq j_2$, the ratio $\pi_{j_1} / \pi_{j_2}$ approximates $\exp\{-n(\delta_{j_1} - \delta_{j_2})\}$. The commonly used Bayes factor between two candidate models has been proven to have the same asymptotic behavior. For any two different models $\mathcal{M}_{j_1}$ and $\mathcal{M}_{j_2}$, the Bayes factor $\BF(\mathcal{M}_{j_1}, \mathcal{M}_{j_2}) = I^{(j_1)}_{n} / I^{(j_2)}_{n}$ is approximately equal to $\exp\{-n(\delta_{j_1} - \delta_{j_2})\}$ under mild conditions~\cite[Theorem 1]{Walker+:04}, suggesting its asymptotic equivalence to the proposed D-probabilities. 

D-probabilities can provide evidence of lack of fit of a parametric model under consideration, and the connection with Bayes factors further endows D-probabilities with a scale for strength of evidence (of a lack of fit) by adopting the convention in Bayes factors~\citep{kass1995reference}: 
	\begin{center}
	\begin{tabular}{cc}
		D-probability & Strength of evidence (of a lack of fit) \\
		 $ < 1/150$ & very strong \\
		 $[1/150, 1/20)$ & strong \\
		 $[1/20, 1/3)$ & positive \\
		 $[1/3, 1]$ &  not worth more than a bare mention
	\end{tabular} 
\end{center}
However, unlike Bayes factors which do not allow improper priors on model-specific parameters, D-probabilities are well defined under improper priors as long as the posteriors under each model are proper.

In addition to the pragmatic advantages of the proposed D-Bayes inference framework, our notion of model weights has a connection to a range of concepts. In Section~\ref{sec:interpretation}, we establish various probabilistic interpretations to justify calibration and coherence.  

\section{Probabilistic interpretations of model weights} 
\label{sec:interpretation} 
\subsection{Relationship to Boltzmann's formulation} 
\label{sec:Boltzmann} 
We are interested in exploring probabilistic interpretations of  \eqref{eq:prob.true.marginal}.  We start by considering discrete sample spaces with $\mathcal{Y} = \{\mathcal{y}_1,\ldots,\mathcal{y}_m \}$.  The probability mass function $f^*$ under the oracle model places probability $a_l$ on element $\mathcal{y}_l$, for $l=1,\ldots,m$, while the probability mass function $f_j$ under model $\MM_j$ places probabilities $b_1,\ldots,b_m$ on these elements.  Under the oracle model, the expected number of occurrences of $\mathcal{y}_l$ in $n$ trials is $n a_l$, for $l=1,\ldots,m$; we refer to these values as the oracle frequencies.  The probability of obtaining these frequencies from $n$ independent observations from $f_j$ is 
\begin{equation}
\label{eq:multinomial}
\text{Multinomial}(n, na_1, \ldots, na_m; f_j) = {{n}\choose{n a_1, \ldots, na_m}} b_1^{n a_1} \cdot \cdots \cdot b_m^{n a_m}. 
\end{equation}
As commented by~\cite{Akaike:85}, 
~\cite{Boltzmann:1878} derived that the probability in~\eqref{eq:multinomial} is asymptotically equal to 
$\exp\{-n \KL(f^*, f_j)\}$ 
up to a multiplicative constant. Since $\KL(f^*, f^*) = 0$, 
\begin{equation}
\label{eq:bolzmann}
\exp\{-n \KL(f^*, f_j)\} = \frac{\exp\{-n \KL(f^*, f_j)\} }{\exp\{-n \KL(f^*, f^*)\} } \approx \frac{\text{Multinomial}(n, na_1, \ldots, na_m; f_j)}{\text{Multinomial}(n, na_1, \ldots, na_m; f^*)}, 
\end{equation}
as $n \rightarrow \infty$.  The right hand side of (\ref{eq:bolzmann}) is interpretable as the likelihood of obtaining the oracle frequencies under model $\MM_j$ relative to the likelihood under the oracle model.  As the multinomial likelihood of the oracle frequencies is maximized under the oracle model $f^*$, the right hand side of (\ref{eq:bolzmann}) is between zero and one, with the value moving closer to one as model $\MM_j$ improves relative to the oracle.  

Although the Boltzmann (1987) probabilistic interpretation of \eqref{eq:prob.true.marginal} is specific to discrete distributions, the justification can be extended to continuous sample spaces $\mathcal{Y}$ by first partitioning $\mathcal{Y}$ into bins $\mathcal{Y}_1,\ldots,\mathcal{Y}_m$ having $\mathcal{Y} = \bigcup_{l=1}^m \mathcal{Y}_l$ and $\mathcal{Y}_l \bigcap \mathcal{Y}_{l'} = \emptyset$ for all $l \neq l'$.  Then, letting $m \to \infty$ with bin size $| \mathcal{Y}_l | \to 0$, one obtains a limiting form of (\ref{eq:bolzmann}).  Therefore, we can generally use the exponentiated entropy between a candidate model and true model as a type of absolute probability weight on the candidate model for both discrete and continuous distributions.

\subsection{Decision rules, model probabilities and p-values} 
\label{sec:decision.rule} 
The proposed model weights can also be obtained by an explicit decision rule in the setting of hypothetical repeated experiments. Suppose we have $m$ repeated experiments ($t = 1, 2, \ldots, m$), where the observations $y_t^{(n)} = \{ y_{t1},\ldots, y_{tn} \}$ are drawn independently from $f^*$ and the different experiments are independent.  Define the likelihood ratio statistic for testing model $\MM_j$ against the oracle model using data from experiment $t$ as: 
\begin{equation}
T_{jt}^{(n)} = \frac{\prod_{i = 1}^n f_j(y_{ti})}{\prod_{i = 1}^n f^*(y_{ti})}.
\end{equation}
The geometric mean of these likelihood ratio statistics across repeated experiments is $R_{jm} = (\prod_{t = 1}^m T_{jt}^{(n)})^{1/m}.$  As $m$ increases, 
$R_{jm} \rightarrow \exp\{-n \KL(f^*, f_j)\}$ almost surely according to the strong law of large numbers, so for sufficiently large $m$, $R_{jm} < 1$ almost surely.  

We define a random decision rule in which $Z_{jm}=1$ corresponds to choosing model $\MM_j$ based on the data from $m$ replicated experiments, with $Z_{jm} = 0$ otherwise.  Choosing model $\MM_j$ is an absolute model selection decision about the merits of model $\MM_j$. 
Based on data from $m$ repeated experiments, as our decision rule we let 
\begin{equation}
Z_{jm} \sim \mbox{Bernoulli}( R_{jm} \wedge 1 ),
\end{equation}
where we take the minimum of $R_{jm}$ and one to remove the possibility of $R_{jm}>1$ for finite $m$.   This decision rule will tend to set $Z_{jm}=1$ with high probability if $f_j$ provides an accurate approximation to $f^*$, with accuracy judged relative to the sample size $n$; as sample size becomes larger it is appropriate to ask more of a parametric model.  If on average across experiments the information in data having a sample size of $n$ is sufficient to clearly distinguish the parametric and oracle model, then the decision rule will tend to set $Z_{jm}=0$ with high probability.  In such a case, model $\MM_j$ would hopefully be assigned a small probability $\pi_j$, suggesting that we should continue our search for an adequate parametric model.

By increasing the number of replicated experiments $m$ and using the geometric mean of the likelihood ratio test statistics, we remove sensitivity to variability across experiments.  Let $R_j$ and $Z_j$ denote the random variables corresponding to $R_{jm}$ and $Z_{jm}$, respectively. In the limit as the number of experiments increases $m \to \infty$,  we obtain that
\begin{equation}
\pr( Z_j = 1 ) = \exp\{ -n \KL(f^*, f_j) \} = \pi_j.
\end{equation}
Hence, the absolute model weights $\pi_j$ corresponds to the probability of selecting model $\MM_j$ based on a randomized decision rule that assesses whether the data in a sample size of $n$ have sufficient information to distinguish the parametric model under consideration from the oracle.

Letting $T_j^{(n)}$ denote the likelihood ratio test statistic based on a single experiment and $t_j^{(n)}$ be the observed value of $T^{(n)}$, ~\cite{Bahadur1967} shows that under certain regularity conditions $\pi_j$ is asymptotically the p-value of the likelihood ratio test under the null hypothesis that the data are generated from model $\MM_j$: 
\begin{equation}
\underset{n \rightarrow \infty}{\lim} \frac{1}{n} \log P(T_j^{(n)} < t_j^{(n)} \mid  H_0) = -\KL(f^*, f_j). 
\end{equation}
Hence, the absolute model weight $\pi_j$ also has a frequentist testing interpretation.

\section{Simulation} 
\label{sec:simulation} 
In this section, we conduct simulations to investigate the finite sample performance of the proposed D-probabilities, while comparing with usual Bayesian approaches in various settings.
We focus initially on a univariate case; Section~\ref{section:application} illustrates comparisons for multivariate cases. Under model~\eqref{eq:model}, we generate the covariate $x$ from the uniform distribution on (0, 1), use $\sigma = 1$ for the noise standard deviation, and let the sample size $n = 100$. We consider the model list $\MM = \{\MM_F, \MM_N\}$, where the full model $\MM_F$ is the simple linear regression model and the null model $\MM_N$ only has the intercept.  We index the two models by $j = F$ and $j = N$.  Throughout this section, we use the default prior in~\eqref{eq:prior.m.j} with prior precision $\Sigma_j^{-1} = 0$ for the parameters in model $\MM_j$ to calculate all D-probabilities. The number of replications is 1000.

We first consider the mean function 
\begin{equation}
\label{eq:curvature} 
\mu(x) = 10 + \beta(\gamma) x + \gamma \log x, 
\end{equation}
where $\beta(\gamma) = \{12 (e^{1/10} - 1 - \gamma^2/4)\}^{1/2} - 3 \gamma$ and $\gamma$ is some positive constant in $\Gamma = [0, 2 (e^{1/10} - 1)^{1/2}]$. According to Lemma 3 in the supplementary material, we can obtain that $\delta_F = \{\log(1 + \gamma^2/4)\}/2$, and the specification of $\beta(\gamma)$ ensures that $\delta_N = 0.05$ for all $\gamma \in \Gamma$.  Therefore, the parameter $\gamma$ controls how the mean function deviates from a linear model.  We have the $\MM$-closed situation when $\gamma = 0$, and $\MM$-complete situation when $\gamma > 0$. 

In addition to the D-probabilities, we estimate usual Bayesian model probabilities using Zellner $g$ priors for the regression coefficients, with covariance $\Sigma_j^{-1} = (X_j^T X_j) / g$.  We consider two choices of $g$: the unit information prior in which $g=n$~\citep{kass1995reference}, which leads to the Bayesian information criteria for model selection under some conditions, and the hyper-$g$ prior~\citep{Liang+:08}, which lets $g / (g + 1) \sim \mathrm{Beta}(1, 1/2)$. Both these priors have been implemented in the $\texttt{R}$ package $\texttt{BAS}$.  

We use 20 equal-spaced grid points from 0 to $2 (e^{1/10} - 1)^{1/2}$ for $\gamma$.  Figure~\ref{fig:curvature} plots various estimates versus $\delta_F$.  Figure~\ref{fig:curvature} (a) shows that the conditional D-probabilities $\pi_{F, 1 \mid \MM}$ are between the model probabilities under unit information and hyper-$g$ priors, while the alternative form $\pi_{F, 2 \mid \MM}$ tends to give larger D-probabilities due to the smaller penalty on model complexity as in~\eqref{eq:penalty.flat}. Figure~\ref{fig:curvature} (b) presents the inclusion probability of the covariate $x$. The unit information prior and hyper-$g$ prior are observed to give smaller inclusion probability of $x$ when $\delta_F = \delta_N = 0\cdotp\!05$, compared to D-probabilities. In this case, the mean function is $\mu(x) = 10 - 1\cdotp\!95 x + 0\cdotp\!65 \log x$.  The covariate $x$ clearly impacts $\mu(x)$ but all model probabilities tend to prefer the null model. 

We next compared out-of-sample prediction accuracy based on the root mean squared error:
$
\{ \sum_{i \in \mathcal{T}}(\widehat{Y}_i - Y_i)^2 / 100\}^{1/2}
$
where $\mathcal{T}$ is a validation set. For each method, we calculate the predictive mean under the highest probability model.
As shown in Figure~\ref{fig:curvature} (c), all methods have similar predictive performance. A small absolute D-probability, or equivalently a large $\min(\delta_F, \delta_N) = \delta_F$, suggests that 
the candidate model has poor fit relative to the nonparametric model $f_0$. One may conclude that none of the models in $\MM$ fit sufficiently well if they all have \KLD\ larger than $-\log(\pi^{0})/n$ where ${\pi}^{0}$ is a threshold on D-probabilities specifying the tolerance of model inadequacy. The cutoffs (1/3, 1/20, 1/150) on $\pi^0$ according to the convention in Section~\ref{sec:BF} translate to $(1.1, 3.0, 5.0) \times 10^{-2}$ on \KLD. Figure~\ref{fig:curvature} (c) shows that the nonparametric reference model starts to considerably outperform both $\MM_F$ and $\MM_N$ when $\delta_F$ is around $1.1 \times 10^{-2}$, indicating that switching to the nonparametric model when there is `positive' evidence of lack of fit may improve prediction in this case.

\begin{figure}[h!]
	\centering
	\begin{tabular}{ccc}
		\includegraphics[page = 1, width = 0.3\linewidth]{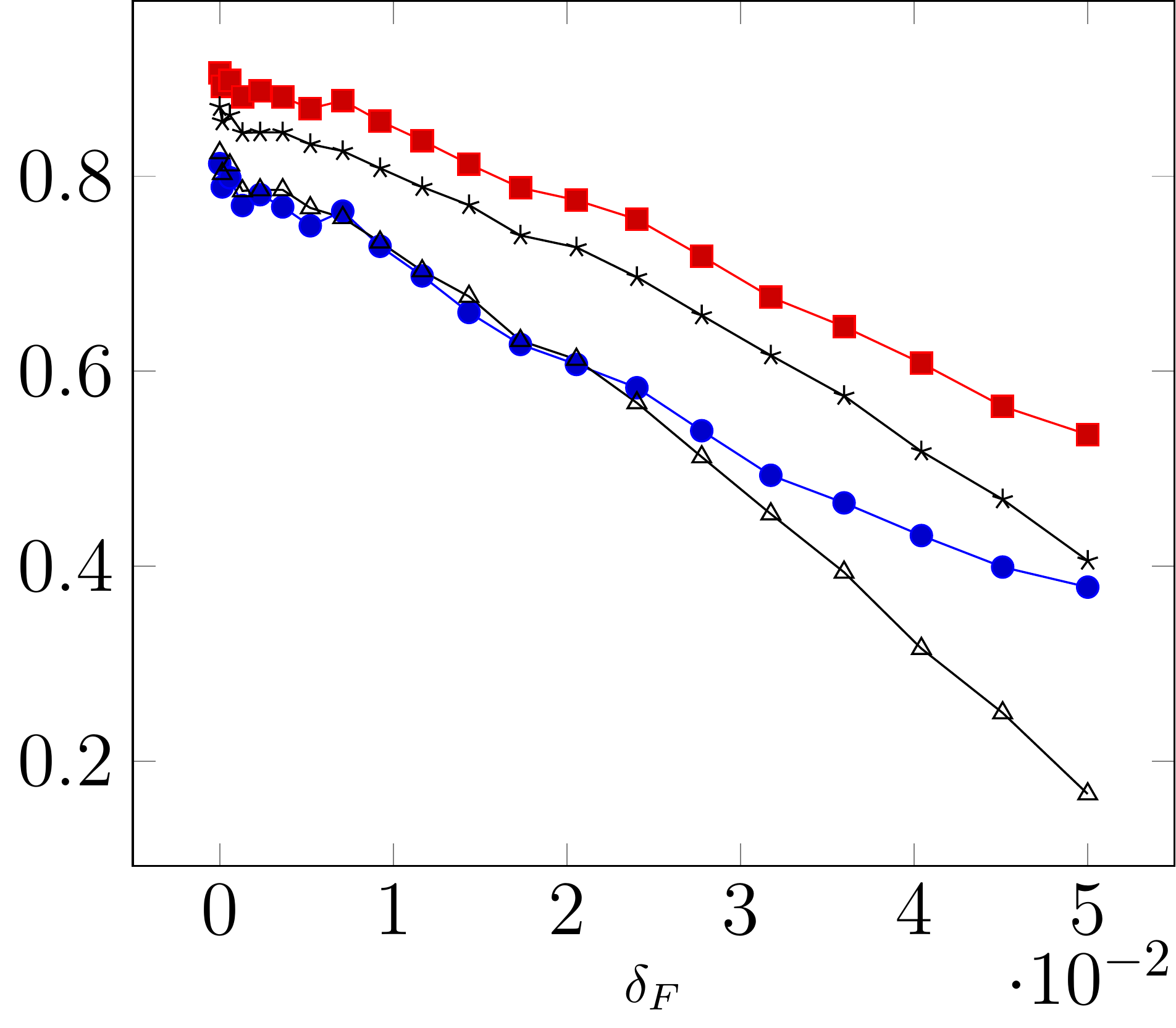} &
		\includegraphics[page = 2, width = 0.3\linewidth]{figures/plotCurve.pdf} & 
		\includegraphics[page = 3, width = 0.3\linewidth]{figures/plotCurve.pdf} \\
		(a) posterior probability of $\MM_F$ & 
		(b) inclusion probability of $x$ & 
		(c) out-of-sample RMSE 
	\end{tabular} 
\caption{Comparison of D-probabilities versus other posterior model probabilities under model (\ref{eq:curvature}): conditional D-probabilities $\pi_{F, 1 \mid \MM}$ (circle), conditional D-probabilities $\pi_{F, 2 \mid \MM}$ (square), unit information prior (triangle) and hyper $g$ prior (star). Plot (c) presents the out-of-sample root mean squared error or RMSE of the highest probability models selected by each method including the reference model (diamond). Results are based on 1000 replications.}
\label{fig:curvature} 
\end{figure}

We next consider another four cases with different mean functions: $\mu_1(x) = 10 + 10 x$, $\mu_2(x) = 10$, $\mu_3(x) = 10 + \sin(30 \pi x)$ and $\mu_4(x) = 10 x^5$ where Case $i$ uses the mean function $\mu_i(x)$ for $i = 1, 2, 3, 4$. Case 1 and Case 2 are for the $\MM$-closed situation where the model list $\MM$  contains the true model,  Case 3 and Case 4 are for the $\MM$-complete situation while Case 3 is close to an $\MM$-open situation as the reference model is expected to fail to detect the high frequency oscillation.  We vary the sample size $n = (100, 500)$ and use 1000 replications. 
While detailed descriptions of this simulation are deferred to the supplementary material, we observe that both $\wKL_t(f_0, f_j)$ quickly converge to the corresponding $\delta_j$ in Case 1, 2 and 4. Case 3 corresponds to a subtle cyclic deviation from Case 2; we find in this case that the reference nonparametric model fails to pick up the cyclic deviation so that the estimates of $\wKL_t(f_0, f_j)$ are close to Case 2 but deviate from $\delta_j$.  However, the estimates of $\delta_N - \delta_F$ are accurate, suggesting robustness of the conditional D-probabilities to performance of the reference model. 
In Case 1,  the D-probability is higher for the true model $\MM_F$ in all replications, suggesting model selection uncertainty close to zero.  Model $\MM_F$ has absolute D-probabilities that are not close to zero, such as $0\cdotp\!3$, providing evidence it is an adequate approximation.  In Case 2, both models have high D-probabilities as expected.  The inclusion probability of the covariate $x$ is either $0\cdotp\!09$ using $\pi_{j, 1}$ or $0\cdotp\!23$ using $\pi_{j, 2}$, suggesting preference for the null model, with $\pi_{j,1}$ providing a greater penalty on model complexity.  The slight difference in scales between $\pi_{j,1}$ and $\pi_{j,2}$ is observed to be less prominent for conditional D-probabilities.  In Case 4, the full model $\MM_F$ is assigned probability 1, but the D-probabilities are both close to zero, suggesting lack of fit.  

\section{Data application: ozone data}
\label{section:application} 

As another illustration of the differences between our proposed D-probability based approach and usual Bayesian approaches to variable selection, we focus on ground-level ozone data~\citep{breiman1985estimating, casella2012objective, Liang+:08}. The ozone data, which are available in the {\tt R} package {\tt faraway}, consist of $n = 330$ daily ozone readings in Los Angeles along with eight meteorological explanatory variables. We rescale each of these explanatory variables $x$ to [0, 1] via the transformation $(x - x_{\min}) / (x_{\max} - x_{\min})$ where $x_{\max}$ and $x_{\min}$ are the observed maximum and minimum values of $x$, respectively. 
The description of all variables are given in the supplementary material. The model list $\MM$ includes $2^8 = 256$ candidate models corresponding to all possible subsets of explanatory variables. 

We first calculate both versions of our D-probabilities, $(\pi_{j, 1}, \pi_{j, 2})$, for each candidate model following Section~\ref{sec:linear.model}.  Relative to usual Bayesian model probabilities, one of the appealing aspects of D-probabilities is the reduced sensitivity to the choice of prior distribution for model-specific parameters. In fact, we can even use default non-informative priors without the usual pitfalls.  To illustrate this, we first considered the default prior in~\eqref{eq:prior.m.j} with prior precision $\Sigma_j^{-1} = 0$ for the parameters in model $\MM_j$. In the Bayesian literature on variable selection in linear models, the most broadly used priors for the regression coefficients fall in the Zellner $g$ family, and we consider the unit information prior and hyper-$g$ prior as in Section~\ref{sec:simulation}. 

Figure~\ref{fig:KL1.vs.others} plots our conditional D-probabilities $\pi_{j, 1 \mid \MM}$ for each candidate model under a default prior against other choices, including (a) $\pi_{j,1 \mid \MM}$ under a unit information prior, (b) the alternative form for the D-probabilities $\pi_{j, 2 \mid \MM}$ under a default prior, (c) usual Bayes model probabilities under a unit information prior, and (d) usual Bayes model probabilities under a hyper-$g$ prior.  As expected, we found that D-probabilities were insensitive to slight changes in the prior distribution for the regression coefficients, with the values under the default prior essentially identical to those under a unit information prior.  In addition, the two version of conditional D-probabilities were highly correlated.  We also found that the D-probabilities were correlated and had similar magnitudes to the usual Bayes model probabilities under a unit-information prior, but differed dramatically from the Bayes model probabilities under a hyper-$g$ prior.  In particular, the highest Bayes model probabilities under the hyper-$g$ prior were much larger than the highest D-probabilities.  
\begin{figure}
	\centering
	\begin{tabular}{ccccc}
		\begin{sideways} \rule[0pt]{0.22in}{0pt} $\pi_{j, 1 \mid \MM}$ \end{sideways} & 
		\includegraphics[trim = 0 30 0 20, clip, width= 0.2\textwidth, page = 4]{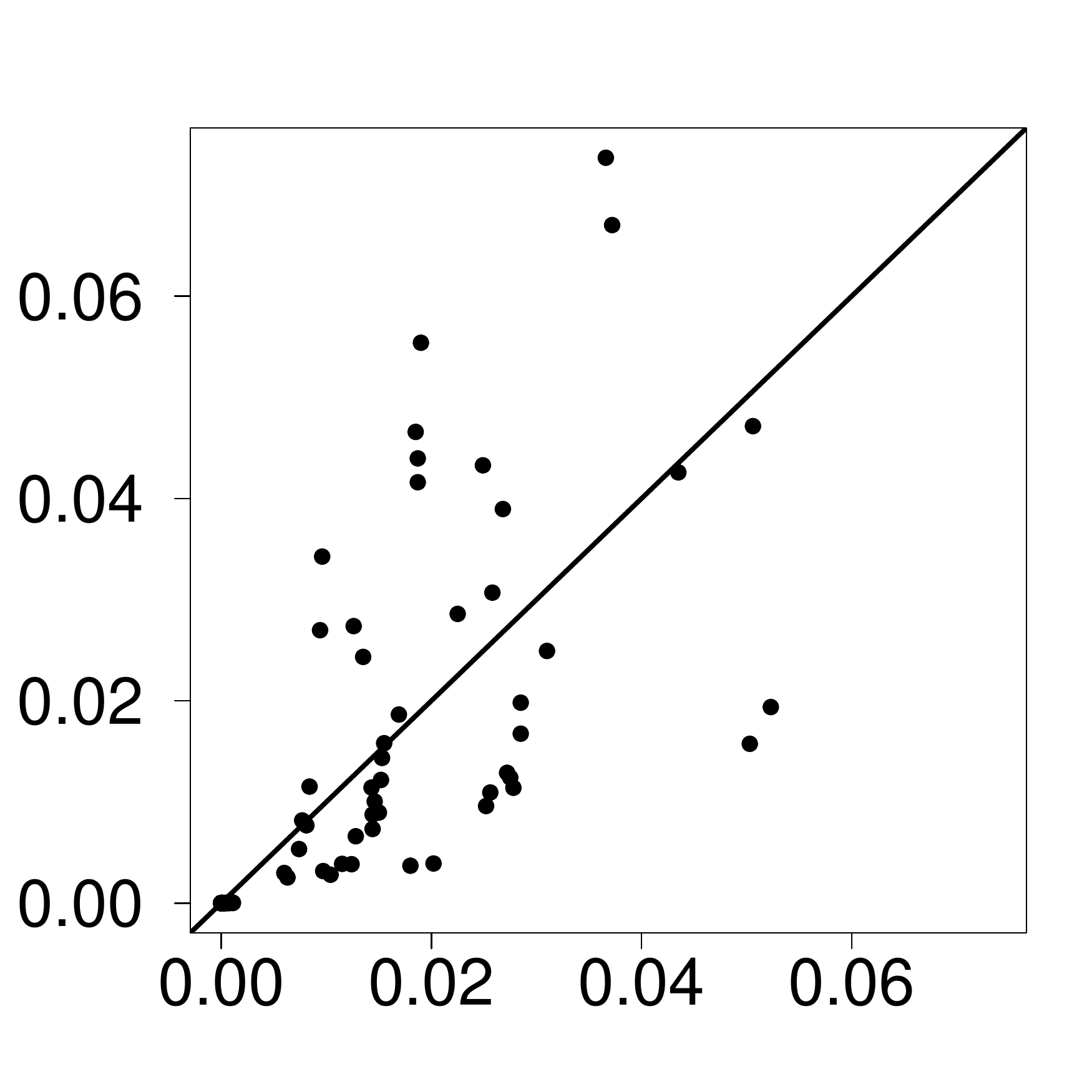} & 
		\includegraphics[trim = 0 30 0 20, clip, width= 0.2\textwidth, page = 3]{{figures/Harry/KL1.vs.others.xyswitch}.pdf} & 
		\includegraphics[trim = 0 30 0 20, clip, width= 0.2\textwidth, page = 1]{{figures/Harry/KL1.vs.others.xyswitch}.pdf} & 
		\includegraphics[trim = 2 30 0 20, clip, width= 0.2\textwidth, page = 2]{{figures/Harry/KL1.vs.others.xyswitch}.pdf} \\
		& (a) $\pi_{j, 1 \mid \MM}$ with $g = n$ & (b) $\pi_{j, 2 \mid \MM}$ & (c) unit information & (d) hyper-$g$ prior 
	\end{tabular} 
	\caption{Comparison of $\pi_{j, 1 \mid \MM}$ versus other posterior model probabilities: $\pi_{j, 1 \mid \MM}$ with $g = n$, $\pi_{j, 2 \mid \MM}$, unit information and hyper-$g$ prior, from left to right. } 
	\label{fig:KL1.vs.others} 
\end{figure}
This is also illustrated in Table~\ref{table: ozone}, which presents the model having the highest probability under each of the approaches.  The top models based on $\pi_{j,1 \mid \MM}$ and $\pi_{j,2 \mid \MM}$ with default priors had probabilities $0\cdotp\!07$ and $0\cdotp\!09$, respectively.  In contrast, the model having the highest usual Bayesian probability under the hyper-$g$ prior was $0\cdotp\!39$, compared to a value of only $0\cdotp\!05$ under a unit information prior.  This serves in part to illustrate again the well known sensitivity of usual Bayesian model probabilities to the prior on the regression coefficients.  Each of the four different approaches considered in the Table yielded somewhat different top models.  This difference in ordering of top models is not unexpected given that the sample size is only $n=165$, leaving out half the data to allow cross validation, and there are $256$ models under consideration.  
To gauge the extent to which the data can distinguish between these different top models, we compared out-of-sample prediction accuracy based on the root mean squared error as in Section~\ref{sec:simulation}. 
As shown in the last column of Table~\ref{table: ozone}, all of the models had essentially identical predictive performance.  This is consistent with our expectation that the data are not sufficient to select from among a moderate number of top models, suggesting model probabilities in the single digits are more realistic than the $0\cdotp\!39$ value produced by the hyper-$g$ prior.

Another unique aspect of the D-probability approach is the ability to provide absolute model weights instead of just values conditionally on falling in the list of possible linear models.
We find in the ozone application that the absolute D-probabilities are extremely small for all of the candidate models, having a maximum value of only $1\cdotp\!65 \times 10^{-22}$. This suggests that linear models provide a  poor fit to the data relative to a nonparametric model; indeed, the root mean square error out of sample for the nonparametric reference model was significantly reduced to $4\cdotp\!09$ from a minimum value of $4\cdotp\!61$ for any of the linear models.  Adding quadratic and interaction terms to expand the set of linear models leads to reductions to a range of $4\cdotp\!4$ to $4\cdotp\!6$ for root mean square errors out of sample~\citep{Liang+:08}, but there was still a significant gap in performance relative to the reference nonparametric model.  This application has illustrated the practical advantages of D-probabilities relative to usual Bayes model probabilities in terms of reducing sensitivity to the prior and allowing the use of reference priors, while providing evidence of lack of fit of parametric models and producing a nonparametric reference as an alternative.
\begin{table}
	\caption{Selected variables and the corresponding posterior model probability using various methods on the entire dataset. The last column presents the out-of-sample root mean squared error or RMSE of the highest probability models selected by each method; results are based on 100 replications and the maximum standard errors is $0\cdotp\!02$} 
	\label{table: ozone} 
	\begin{tabular}{ll@{\hskip -1.1pt}cc} 
		Method & Variables in the model & Probability & RMSE \\
	$\pi_{j, 1 \mid \MM}$ & \texttt{vh,humidity,temp,ibh,ibt,vis} & $0\cdotp\!07$ & $4\cdotp\!61$\\
		$\pi_{j, 2 \mid \MM}$ & \texttt{vh,wind,humidity,temp,ibh,dpg,ibt,vis} & $0\cdotp\!09$ & $4\cdotp\!61$\\
			unit information & \texttt{humidity,temp,ibh,vis} & $0\cdotp\!05$ & $4\cdotp\!62$\\
			hyper-$g$ prior & \texttt{humidity,temp,ibh} & $0\cdotp\!39$ & $4\cdotp\!63$\\
	\end{tabular} 
\end{table}

\section{Discussion}
\label{sec:model.averaging} 
Model aggregation~\citep{Tsybakov2014} makes predictions at new observations by a weighted average $\hat{f}_{\text{new}} = \sum_{j = 1}^k w_j \hat{f}_j$, where $\hat{f}_j$ is the prediction from model $\MM_j$ and the weights $(w_1, \ldots, w_k)$ are to be determined. The exponential weighting (EW) method in~\cite{Rigollet2012a} relies on an unbiased estimator of the risk, namely, 
\begin{equation}
w_j^{\text{EW}} = \exp\left\{ - \frac{\text{RSS}_j}{4 \sigma^2_0} -  \frac{1}{2}(p_j + 1) + \frac{n}{4} \right\}, 
\end{equation}
where $\text{RSS}_j$ is the residual sum of squares based on least square fits to model $\MM_j$, $p_j$ is the number of covariates in $\MM_j$, and $\sigma_0^2$ is the model variance. We apply the two types of conditional D-probabilities $(\pi_{j, 1|\MM}, \pi_{j, 2|\MM})$ as well as EW to the ozone data, and calculate the out-of-sample root mean squared error (RMSE) of the aggregated prediction based on 100 replications as in Table~\ref{table: ozone}. We estimate the model variance $\sigma^2_0$ by its posterior mean in the nonparametric Gaussian process reference model to favor the method of EW. 

Figure~\ref{fig:ess} (a) clearly shows the better performance of D-probabilities versus EW in model averaging. To further investigate the distribution of model weights, we calculate the effective number of models $1/(\sum_{j = 1}^{256} w_j^2)$ to characterize the weight pattern by each method, which is $(26.12, 23.99, 1.13)$ corresponding to $(\pi_{j, 1|\MM}, \pi_{j, 2|\MM}, \text{EW})$, respectively. Therefore, model weights in EW are dominated 
by one or two models on average, but D-probabilities assign non-negligible weights to a larger number of models. This may heuristically explain why D-probability weighting outperforms exponential weighting in this particular application. It is an interesting future topic to explore theoretical explanations for when D-probabilities outperform EW and vise versa, adding to the literature on optimality of EW~\citep{Rigollet2012a, Arias-Castro2014a}. 
\begin{figure}[h!]
	\centering
	\includegraphics[width=0.6\linewidth]{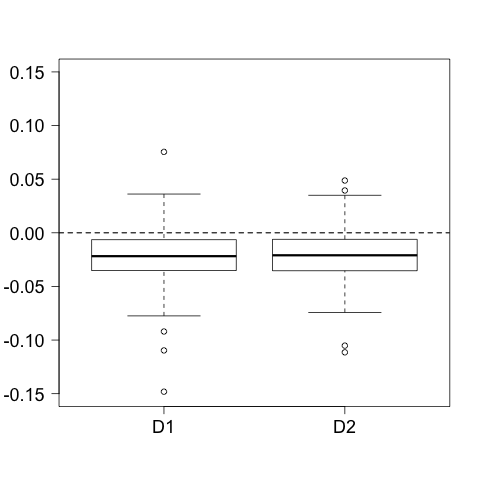} 
	\caption{Comparison of out-of-sample prediction between two types of D-probabilities (coded as D1 and D2) and EW. Each boxplot is the difference of RMSE from the corresponding method subtracting the RMSE of EW based on 100 replications.} 
	\label{fig:ess}
\end{figure}


The main hurdle in extending methodology for calculating D-probabilities to broader settings is computational.  A very broad variety of cases can be encompassed by using Dirichlet process mixtures (DPMs), and related formulations, as the nonparametric reference model.  For example, one may rely on DPMs of Gaussian linear regressions to allow the conditional density $f(y|x)$ to be unknown, potentially multimodal, and changing flexibly with $x$.  In such cases, {\em local} estimates of Kullback-Leibler divergence can be obtained within each mixture component, conditionally on the component allocations and other parameters.  By relying on the rich literature on Markov chain Monte Carlo (MCMC) samplers for DPMs to marginalize out the unknowns being conditioned on, one can then estimate the D-probabilities in equation~\eqref{eq:posterior.mean}.   Related MCMC-based approaches can be used in broader settings, including for discrete data.





\section*{Supplementary Material}
Supplementary material includes two lemmas used in the proof of Theorem~\ref{thm:BNP},  a third lemma to give analytical forms of the divergence $\delta_j$ for linear regression models when the true model is~\eqref{eq:model}, additional simulation results and the description of variables in the Ozone data. The R code to implement the proposed methods with demonstration is available at \url{https://github.com/xylimeng/D-probability}.

\section*{Appendix}
\label{sec:appendix}

\subsection*{Proof of Theorem~\ref{thm:BNP}}
	Throughout this proof, we use the notation $a \lesssim b$ if $a \leq  C b$ for a universal constant, and $a \asymp b$ if $a \lesssim b \lesssim a$. We first view the covariates $x_1, \ldots, x_n$ as fixed; for a function $g: [0, 1]^d \rightarrow \mathbb{R}$,  we define the empirical norm $\|g\|_n$ by $\|g\|_n = \left\{\sum_{i = 1}^n g^2(x_i)/n\right\}^{1/2}$.  
	
	For the reference model $\MM_0$, let $\mathbb{H}^{\bandwidth}$ be the reproducing kernel Hilbert space of the Gaussian process prior $\Pi_{\bandwidth}$ and $\|\cdot\|_{\mathbb{H}^{\bandwidth}}$ be the associated norm; see~\cite{van+van:08:RKHS} for more technical details about the reproducing kernel Hilbert space. The so-called concentration function $\phi_{\mu_0^*}$ is 
	\begin{equation}
	\phi_{\mu_0^*}(\epsilon) = \underset{h \in \RKHS: \|h - \mu_0^*\|_{\infty}< \epsilon}{\inf} \|h\|^2_{\RKHS} - \log P(\mu: \|\mu\|_{\infty} < \epsilon). 
	\end{equation}
	Letting $\epsilon_n  = \sup\{\epsilon > 0: \phi_{\mu_0^*}(\epsilon) \geq n \epsilon^2\}$, we have  
	\begin{equation}
	\label{eq:model0.mean} 
	E_{\theta_0^*}\int \| \mu_0 - \mu_0^* \|_n^2  \pi(\mu_0\mid X, Y) d \mu_0 \lesssim \epsilon_n^2
	\end{equation}
	uniformly in the design points, in view of Theorem 1 in~\cite{VanDerVaart2011}. 
	
	We next calculate the contraction rate $\epsilon_n$ under the assumption that $\mu_0^* \in C^{\Holder}[0, 1]^d$. According to Lemma 4.2 and 4.3 in~\cite{Bhattacharya2014}, there exists constants $C_1$ and $C_2$ depending only on $\mu_0^*$ and a constant $C_3$ such that  
	\begin{equation}
	\inf \bigg\{ \|h\|^2_{\mathbb{H}^{\bandwidth}}: \|h - \mu_0^* \|_{\infty} \leq C_1 \sum_{i = 1}^{d} \bandwidth_i^{-\Holder_i} \bigg\} \leq C_2 \prod_{i = 1}^d \bandwidth_i, 
	\end{equation}
	\begin{equation}
	- \log \Pi_{\lambda}(\mu: \|\mu\|_{\infty} \leq \epsilon) \leq C_3 \prod_{i = 1}^d \bandwidth_i \left\{ \frac{\max(\bandwidth)}{\epsilon}\right\}^{d + 1}. 
	\end{equation}
	For a sequence $\epsilon_n \rightarrow 0$ and $n \epsilon_n^2 \rightarrow \infty$, we equate $\epsilon_n \asymp \sum_{i = 1}^{d} \bandwidth_i^{-\Holder_i} $ and $ \prod_{i = 1}^d \bandwidth_i (\log n)^{d + 1} \asymp n \epsilon_n^2$. Letting $\bandwidth_i = \epsilon_n^{-1/\Holder_i}$,  we then have  
	$n \epsilon_n^2 \asymp \epsilon_n^{-\Holder_0^{-1}} (\log n)^{d + 1}$. Consequently, the optimal choice of $\lambda$ and the corresponding contraction rate are 
	\begin{equation}\label{eq:rate} 
	\lambda_i = n^{\Holder_0/(2 \Holder_0 \Holder_i + \Holder_i)} (\log n)^{(d + 1)\Holder_0/(2 \Holder_0 \Holder_i+ \Holder_i)}, \; \epsilon_n = n^{-\Holder_0/(2 \Holder_0 + 1)} (\log n)^{(d + 1)\Holder_0/(2 \Holder_0 + 1)}. 
	\end{equation}
	In addition, the standard deviation $\sigma_0$ has the same contraction rate $\epsilon_n$, that is 
	\begin{equation}
	\label{eq:model0.var} 
	E_{\theta_0^*} \int |\sigma_0 - \sigma_0^*| \pi (\sigma_0 | X, Y) \lesssim \epsilon_n, 
	\end{equation}
	according to Theorem 3.3 in~\cite{VanDerVaart2008}. 
	
	For the candidate model $\MM_j$, we shall apply the Bernstein-von Mises theorem under misspecification~\citep{Bunke1998, Kleijn2012}. Model $\MM_j$ is a finite-dimensional model with Gaussian noise and the true regression model has a smooth mean function, thus regularity conditions for asymptotic normality are satisfied; for example, see Remark 6 in~\cite{Bunke1998}. Since $\|\mu_j - \mu_j^*\|_n^2 = ( X_j \beta_j - X_j \beta_j^*)^T  ( X_j \beta_j - X_j \beta_j^*) = (\beta_j - \beta_j^*)^T X_j^T X_j (\beta_j - \beta_j^*)$, we have 
	\begin{equation}
	\label{eq:modelj.mean} 
	E_{\theta_0^*}\int \| \mu_j - \mu_j^* \|_n^2  \pi(\beta_j \mid  X, Y) d \beta_j \lesssim n^{-1}, 
	\end{equation} 
	and 
	\begin{equation}
	\label{eq:modelj.var} 
	E_{\theta_0^*}\int | \sigma_j - \sigma_j^* |^2  \pi(\sigma_j \mid  X, Y) d \sigma_j \lesssim n^{-1}, 
	\end{equation} 
	uniformly in the design points $(x_1, \ldots, x_n)$. 
	
	Let 
	\begin{equation}
	\wKL_1\{f_0(\cdot \mid \theta_0), f_j(\cdot \mid \theta_j)\}= \log \frac{\sigma_j}{\sigma_0} + \frac{\sigma_0^2+ \|\mu_j - \mu_0\|_2^2}{ 2 \sigma_j^{2}} - \frac{1}{2}, 
	\end{equation}
	and 			  			  
	\begin{equation}
	\delta_j^{(n)} = \log \frac{\sigma_j^*}{\sigma_0^*} + \frac{\sigma_0^{*2} + \|\mu_j^* - \mu_0^*\|_2^2}{ 2 \sigma_j^{*2}} - \frac{1}{2}. 
	\end{equation}
	Using the facts that both $\sigma_j$ and $\sigma_0$ have bounded supports and $\wKL_1\{f_0(\cdot \mid \theta_0), f_j(\cdot \mid \theta_j)\}$ is a continuous function of $(\sigma_0, \sigma_j)$, it is easy to verify that 
	\begin{align}
	|\wKL_1(f_0, f_j) - \delta_j^{(n)}| & = |E [\wKL_1\{f_0(\cdot \mid \theta_0), f_j(\cdot \mid \theta_j)\}]- \delta_j^{(n)} | \\
	& \lesssim E | \sigma_j - \sigma_j^*| +  E |\sigma_0 - \sigma_j^*| + \left| E \|\mu_j - \mu_0\|_2^2 -  \|\mu_j^* - \mu_0^*\|_2^2 \right| \\
	& \lesssim E | \sigma_j - \sigma_j^*| + E |\sigma_0 - \sigma_j^*| + E \|\mu_j - \mu_j^*\|_2^2 + E \|\mu_0 - \mu_0^*\|_2^2, 
	\end{align}
	where the expectation $E$ is taken with respect to the posterior distributions of the corresponding parameters. Combining equations~\eqref{eq:model0.mean},~\eqref{eq:model0.var},~\eqref{eq:modelj.mean} and~\eqref{eq:modelj.var}, we obtain that 
	\begin{equation}
	E_{\theta_0^*} |\wKL_1(f_0, f_j) - \delta_j^{(n)}| \lesssim n^{-1/2} + \epsilon_n + n^{-1} + \epsilon_n^2 \leq c \epsilon_n, 
	\end{equation}
	for some universal constant $c$ uniformly in the design points $\{x_1, \ldots, x_n\}$. 
	
	For random designs where $x \sim F$, we have 
	\begin{equation}
	\delta_j = \log \frac{\sigma_j^*}{\sigma_0^*} + \frac{\sigma_0^{*2} +  \int |\mu_j^*(x) - \mu_0(x)^*|^2 d F}{ 2 \sigma_j^{*2}} - \frac{1}{2}. 
	\end{equation}
	By a direct application of the central limit theorem and boundedness of $\|\mu_j^* - \mu_0^*\|$, we obtain $E_F |\delta_j^{(n)} - \delta_j| \lesssim n^{-1/2}$.  Since $\epsilon_n \gtrsim n^{-1/2}$, the contraction rate of $\wKL_1$ is still $\epsilon_n$. 
	
	For $\wKL_2(f_0, f_j)$, in view of Lemma 1 in the supplementary material, we have $\wKL_2(f_0, f_j) - \delta_j \leq \wKL_1(f_0, f_j) - \delta_j$ thus $E_{\theta_0^*} \{\wKL_2(f_0, f_j) - \delta_j\} \leq c \epsilon_n$.  Equations~\eqref{eq:model0.mean},~\eqref{eq:model0.var},~\eqref{eq:modelj.mean} and~\eqref{eq:modelj.var} imply that $\hat{f}_0 \rightarrow f_0^*$ and $\hat{f}_j \rightarrow f_j^*$, therefore $\lim\inf_{n \rightarrow \infty} E_{\theta_0^*}\wKL_2(f_0, f_j) \geq \delta_j$ according to Lemma 2 in the supplementary material.  It follows that $0 \leq E_{\theta_0^*}\wKL_2(f_0, f_j) - \delta_j \leq c \epsilon_n$ for sufficiently large $n$. 
	

\bibliographystyle{apalike}
\bibliography{dpost}

\begin{thebibliography}{}

\bibitem[Aitkin, 1991]{Aitkin:91}
Aitkin, M. (1991).
\newblock {Posterior Bayes factors}.
\newblock {\em Journal of the Royal Statistical Society. Series B
  (Methodological)}, 53(1):111--142.

\bibitem[Akaike, 1973]{Akaike1998}
Akaike, H. (1973).
\newblock {Information theory and an extension of the maximum likelihood
  principle}.
\newblock In Petrov, B.~N. and Csaki, F., editors, {\em Second International
  Symposium on Information Theory}, pages 267--281, Budapest. Akad{\'{e}}miai
  Kiado.

\bibitem[Akaike, 1974]{akaike1974new}
Akaike, H. (1974).
\newblock {A new look at the statistical model identification}.
\newblock {\em IEEE Transactions on Automatic Control}, 19(6):716--723.

\bibitem[Akaike, 1985]{Akaike:85}
Akaike, H. (1985).
\newblock {Prediction and entropy}.
\newblock In {\em A Celebration of Statistics}, pages 1--24. Springer, New
  York.

\bibitem[Arias-Castro and Lounici, 2014]{Arias-Castro2014a}
Arias-Castro, E. and Lounici, K. (2014).
\newblock {Estimation and variable selection with exponential weights}.
\newblock {\em Electronic Journal of Statistics}, 8(1):328--354.

\bibitem[Bahadur, 1967]{Bahadur1967}
Bahadur, R.~R. (1967).
\newblock {An optimal property of the likelihood ratio statistic}.
\newblock In {\em Proceedings of the Fifth Berkeley Symposium on Mathematical
  Statistics and Probability, Volume 1: Statistics}, pages 13--26, Berkeley,
  Calif. University of California Press.

\bibitem[Barron et~al., 1999]{Barron1999a}
Barron, A., Birg{\'{e}}, L., and Massart, P. (1999).
\newblock {Risk bounds for model selection via penalization}.
\newblock {\em Probability Theory and Related Fields}, 113(3):301--413.

\bibitem[Barron, 1998]{Barron1998a}
Barron, A.~R. (1998).
\newblock {Information-theoretic characterization of Bayes performance and the
  choice of priors in parametric and nonparametric problems}.
\newblock In Bernardo, J.~M., Berger, J.~O., Dawid, A.~P., and Smith, A. F.~M.,
  editors, {\em Bayesian Statistics 6}, pages 27--52. Oxford, New York.

\bibitem[Bernardo and Smith, 1994]{Bernardo+Smith:94}
Bernardo, J.~M. and Smith, A. F.~M. (1994).
\newblock {\em {Bayesian Theory}}.
\newblock Wiley, New York, NY.

\bibitem[Bhattacharya et~al., 2014]{Bhattacharya2014}
Bhattacharya, A., Pati, D., and Dunson, D. (2014).
\newblock {Anisotropic function estimation using multi-bandwidth Gaussian
  processes}.
\newblock {\em The Annals of Statistics}, 42(1):352--381.

\bibitem[Birg{\'{e}}, 1986]{Birge1986}
Birg{\'{e}}, L. (1986).
\newblock {On estimating a density using Hellinger distance and some other
  strange facts}.
\newblock {\em Probability Theory and Related Fields}, 71(2):271--291.

\bibitem[Boltzmann, 1878]{Boltzmann:1878}
Boltzmann, L. (1878).
\newblock {Weitere Bemerkungen {\"{u}}ber einige Probleme der mechanischen
  W{\"{a}}rmetheorie}.
\newblock {\em Wiener Berichte}, 78:7--46.

\bibitem[Breiman and Friedman, 1985]{breiman1985estimating}
Breiman, L. and Friedman, J.~H. (1985).
\newblock {Estimating optimal transformations for multiple regression and
  correlation}.
\newblock {\em Journal of the American Statistical Association},
  80(391):580--598.

\bibitem[Bu et~al., 2018]{Bu2018}
Bu, Y., Zou, S., Liang, Y., and Veeravalli, V.~V. (2018).
\newblock {Estimation of KL divergence: optimal minimax rate}.
\newblock {\em IEEE Transactions on Information Theory}, 64(4):2648--2674.

\bibitem[Bunke and Milhaud, 1998]{Bunke1998}
Bunke, O. and Milhaud, X. (1998).
\newblock {Asymptotic behavior of Bayes estimates under possibly incorrect
  models}.
\newblock {\em The Annals of Statistics}, 26(2):617--644.

\bibitem[Campbell, 1966]{Campbell:66}
Campbell, L.~L. (1966).
\newblock {Exponential entropy as a measure of extent of a distribution}.
\newblock {\em Probability Theory and Related Fields}, 5(3):217--225.

\bibitem[Casella and Moreno, 2006]{casella2012objective}
Casella, G. and Moreno, E. (2006).
\newblock {Objective Bayesian variable selection}.
\newblock {\em Journal of the American Statistical Association},
  101(473):157--167.

\bibitem[Castillo, 2014]{Castillo2014}
Castillo, I. (2014).
\newblock {On Bayesian supremum norm contraction rates}.
\newblock {\em The Annals of Statistics}, 42(5):2058--2091.

\bibitem[Claeskens and Hjort, 2008]{claeskens2008model}
Claeskens, G. and Hjort, N.~L. (2008).
\newblock {\em Model Selection and Model Averaging}.
\newblock Cambridge Series in Statistical and Probabilistic Mathematics.
  Cambridge University Press.

\bibitem[Clyde and Iversen, 2013]{Clyde+Iversen:13}
Clyde, M.~A. and Iversen, E.~S. (2013).
\newblock {Bayesian model averaging in the M-open framework}.
\newblock In Damien, P., Dellaportas, P., Polson, N.~G., and Stephens, D.~A.,
  editors, {\em Bayesian Theory and Applications}, pages 483--498. Oxford
  University Press.

\bibitem[Fazel et~al., 2003]{Fazel+:03}
Fazel, M., Hindi, H., and Boyd, S.~P. (2003).
\newblock {Log-det heuristic for matrix rank minimization with applications to
  Hankel and Euclidean distance matrices}.
\newblock In {\em American Control Conference, 2003. Proceedings of the 2003},
  volume~3, pages 2156--2162.

\bibitem[Ferguson, 1973]{Ferguson:73}
Ferguson, T.~S. (1973).
\newblock {A Bayesian analysis of some nonparametric problems}.
\newblock {\em The Annals of Statistics}, 1(2):209--230.

\bibitem[Geisser and Eddy, 1979]{Geisser+Eddy:79}
Geisser, S. and Eddy, W.~F. (1979).
\newblock {A predictive approach to model selection}.
\newblock {\em Journal of the American Statistical Association},
  74(365):153--160.

\bibitem[Gelfand and Dey, 1994]{Gelfand+Dey:94}
Gelfand, A.~E. and Dey, D.~K. (1994).
\newblock {Bayesian model choice: asymptotics and exact calculations}.
\newblock {\em Journal of the Royal Statistical Society. Series B
  (Methodological)}, pages 501--514.

\bibitem[Ghosal and van~der Vaart, 2017]{Ghosal+van:17}
Ghosal, S. and van~der Vaart, A. (2017).
\newblock {\em {Fundamentals of Nonparametric Bayesian Inference}}.
\newblock Cambridge University Press, Cambridge.

\bibitem[Guti{\'{e}}rrez-Pe{\~{n}}a et~al., 2009]{GP+:09}
Guti{\'{e}}rrez-Pe{\~{n}}a, E., Rueda, R., and Contreras-Crist{\'{a}}n, A.
  (2009).
\newblock {Objective parametric model selection procedures from a Bayesian
  nonparametric perspective}.
\newblock {\em Computational Statistics {\&} Data Analysis}, 53(12):4255--4265.

\bibitem[Guti{\'{e}}rrez-Pe{\~{n}}a and Walker, 2005]{GP+Walker:05}
Guti{\'{e}}rrez-Pe{\~{n}}a, E. and Walker, S.~G. (2005).
\newblock {Statistical decision problems and Bayesian nonparametric methods}.
\newblock {\em International Statistical Review}, 73(3):309--330.

\bibitem[Hjort et~al., 2010]{Hjort+:10}
Hjort, N.~L., Holmes, C., M{\"{u}}ller, P., and Walker, S.~G., editors (2010).
\newblock {\em {Bayesian Nonparametrics}}.
\newblock Cambridge University Press, Cambridge.

\bibitem[Hoeting et~al., 1999]{Hoeting1999}
Hoeting, J. J.~A., Madigan, D., Raftery, A. E.~A., and Volinsky, C.~T. (1999).
\newblock {Bayesian model averaging: A tutorial}.
\newblock {\em Statistical Science}, 14(4):382--401.

\bibitem[Hoffmann and Lepski, 2002]{Hoffmann2002}
Hoffmann, M. and Lepski, O. (2002).
\newblock {Random rates in anisotropic regression}.
\newblock {\em The Annals of Statistics}, 30(2):325--358.

\bibitem[Ikeda, 1960]{lkeda1960}
Ikeda, S. (1960).
\newblock {A remark on the convergence of Kullback-Leibler's mean information}.
\newblock {\em Annals of the Institute of Statistical Mathematics},
  12(1):81--88.

\bibitem[Kass and Wasserman, 1995]{kass1995reference}
Kass, R.~E. and Wasserman, L. (1995).
\newblock {A reference Bayesian test for nested hypotheses and its relationship
  to the Schwarz criterion}.
\newblock {\em Journal of the American Statistical Association},
  90(431):928--934.

\bibitem[Kleijn and van~der Vaart, 2012]{Kleijn2012}
Kleijn, B. J.~K. and van~der Vaart, A.~W. (2012).
\newblock {The Bernstein-Von-Mises theorem under misspecification}.
\newblock {\em Electronic Journal of Statistics}, 6:354--381.

\bibitem[Lee and Park, 2006]{Lee2006}
Lee, Y.~K. and Park, B.~U. (2006).
\newblock {Estimation of Kullback-Leibler divergence by local likelihood}.
\newblock {\em Annals of the Institute of Statistical Mathematics},
  58(2):327--340.

\bibitem[Leonenko et~al., 2008]{Leonenko2008}
Leonenko, N., Pronzato, L., and Savani, V. (2008).
\newblock {A class of R{\'{e}}nyi information estimators for multidimensional
  densities}.
\newblock {\em The Annals of Statistics}, 36(5):2153--2182.

\bibitem[Liang et~al., 2008]{Liang+:08}
Liang, F., Paulo, R., Molina, G., Clyde, M.~A., and Berger, J.~O. (2008).
\newblock Mixtures of $g$ priors for {B}ayesian variable selection.
\newblock {\em Journal of the American Statistical Association},
  103(481):410--423.

\bibitem[P{\'{e}}rez-Cruz, 2008]{Perez-Cruz2008}
P{\'{e}}rez-Cruz, F. (2008).
\newblock {Kullback-leibler divergence estimation of continuous distributions}.
\newblock In {\em IEEE International Symposium on Information Theory -
  Proceedings}, pages 1666--1670. IEEE.

\bibitem[Rasmussen and Williams, 2006]{Rasmussen+Williams:06}
Rasmussen, C.~E. and Williams, C. K.~I. (2006).
\newblock {\em {Gaussian Processes for Machine Learning}}.
\newblock Adaptive Computation and Machine Learning. MIT Press, Cambridge, MA.

\bibitem[Rigollet and Tsybakov, 2012]{Rigollet2012a}
Rigollet, P. and Tsybakov, A.~B. (2012).
\newblock {Sparse Estimation by Exponential Weighting}.
\newblock {\em Statistical Science}, 27(4):558--575.

\bibitem[Schwartz, 1965]{schwartz1965bayes}
Schwartz, L. (1965).
\newblock On {B}ayes procedures.
\newblock {\em Zeitschrift f{\"u}r Wahrscheinlichkeitstheorie und verwandte
  Gebiete}, 4(1):10--26.

\bibitem[Shen and Ghosal, 2015]{shen2015adaptive}
Shen, W. and Ghosal, S. (2015).
\newblock Adaptive {B}ayesian procedures using random series priors.
\newblock {\em Scandinavian Journal of Statistics}, 42(4):1194--1213.

\bibitem[Tsybakov, 2014]{Tsybakov2014}
Tsybakov, A.~B. (2014).
\newblock {Aggregation and minimax optimality in high-dimensional estimation}.
\newblock In {\em Proceedings of the International Congress Mathematicians},
  pages 225--246.

\bibitem[van~der Laan et~al., 2007]{vanderLaan2007}
van~der Laan, M.~J., Polley, E.~C., and Hubbard, A.~E. (2007).
\newblock {Super Learner}.
\newblock {\em Statistical Applications in Genetics and Molecular Biology},
  6(1).
\newblock Article 25.

\bibitem[{Van Der Vaart} and {Van Zanten}, 2011]{VanDerVaart2011}
{Van Der Vaart}, A. and {Van Zanten}, H. (2011).
\newblock {Information rates of nonparametric Gaussian process methods}.
\newblock {\em The Journal of Machine Learning Research}, 12:2095--2119.

\bibitem[{Van Der Vaart} and van Zanten, 2008]{van+van:08:RKHS}
{Van Der Vaart}, A. and van Zanten, J. (2008).
\newblock {Reproducing kernel Hilbert spaces of Gaussian priors}.
\newblock In {\em Pushing the Limits of Contemporary Statistics: Contributions
  in Honor of Jayanta K. Ghosh}, volume~3, pages 200--222. Institute of
  Mathematical Statistics, Beachwood, Ohio, USA.

\bibitem[van~der Vaart and van Zanten, 2008]{VanDerVaart2008}
van~der Vaart, A.~W. and van Zanten, J.~H. (2008).
\newblock {Rates of contraction of posterior distributions based on Gaussian
  process priors}.
\newblock {\em The Annals of Statistics}, 36(3):1435--1463.

\bibitem[van~der Vaart and van Zanten, 2009]{van+van:09}
van~der Vaart, A.~W. and van Zanten, J.~H. (2009).
\newblock {Adaptive Bayesian estimation using a Gaussian random field with
  inverse Gamma bandwidth}.
\newblock {\em The Annals of Statistics}, 37(5B):2655--2675.

\bibitem[Viele, 2007]{Viele2007a}
Viele, K. (2007).
\newblock {Nonparametric estimation of Kullback-Leibler information illustrated
  by evaluating goodness of fit}.
\newblock {\em Bayesian Analysis}, 2(2):239--280.

\bibitem[Walker et~al., 2004]{Walker+:04}
Walker, S., Damien, P., and Lenk, P. (2004).
\newblock {On priors with a Kullback--Leibler property}.
\newblock {\em Journal of the American Statistical Association},
  99(466):404--408.

\bibitem[Yoo and Ghosal, 2016]{Yoo2016}
Yoo, W.~W. and Ghosal, S. (2016).
\newblock {Supremum norm posterior contraction and credible sets for
  nonparametric multivariate regression}.
\newblock {\em The Annals of Statistics}, 44(3):1069--1102.

\end{thebibliography}

\centering 
\includepdf[pages=1]{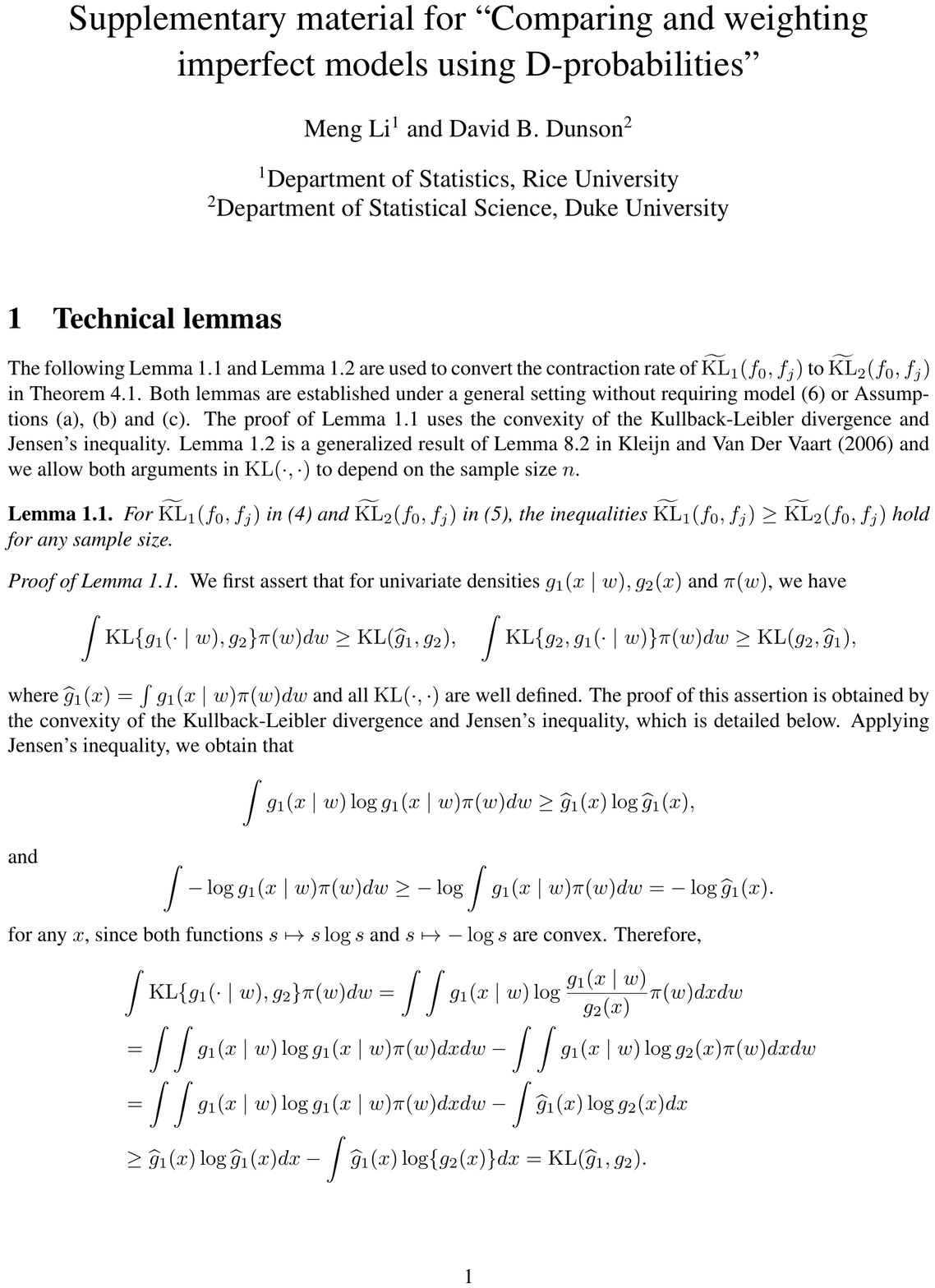}
\includepdf[pages=2]{supp-arxiv.pdf}
\includepdf[pages=3]{supp-arxiv.pdf}
\includepdf[pages=4]{supp-arxiv.pdf}
\includepdf[pages=5]{supp-arxiv.pdf}
\includepdf[pages=6]{supp-arxiv.pdf}

\end{document}